\def\qu#1 {\fbox {\footnote {\ }}\ \footnotetext { From Qu: {\color{red}#1}}}
\def\hqu#1 {}
\def\kq#1 {\fbox {\footnote {\ }}\ \footnotetext { From KangQuan: {\color{blue}#1}}}
\def\hkq#1 {}
\date{}
\newtheorem{corollary}{Corollary}
\newtheorem{lem}{Lemma}[section]
\newtheorem{thm}[lem]{Theorem}
\newtheorem{rmk}[lem]{Remark}
 \newcommand{\tabincell}[2]{\begin{tabular}{@{}#1@{}}#2\end{tabular}}
\begin{document}

\title{Constructing new  APN functions through relative trace functions}

\author{Lijing Zheng\thanks{
		\newline \indent L. Zheng is with the School of Mathematics and Physics, University of South China, Hengyang, Hunan, 421001, China,~(E-mail:~zhenglijing817@163.com).
		\newline \indent H. Kan is with the School of Computer Sciences, Fudan University,
		Shanghai, 200433, China, (E-mail: hbkan@fudan.edu.cn).
			\newline \indent Y. Li is with the Mathematics and Science College of
		Shanghai Normal University, Shanghai, 200234, China, (yanjlmath90@163.com).
		\newline \indent J. Peng is with the Mathematics and Science College of
		Shanghai Normal University, Shanghai, 200234, China, (jpeng@shnu.edu.cn).
\newline \indent D. Tang is with the School of Electronic Information and Electrical Engineering, Shanghai Jiao Tong University, Shanghai, 200240, China,  (dtang@foxmail.com).}, Haibin Kan, Yanjun Li, Jie Peng, Deng Tang }
\maketitle

\noindent {\bf Abstract:}  In 2020, Budaghyan, Helleseth and Kaleyski [IEEE TIT 66(11): 7081-7087, 2020] considered
an infinite family of quadrinomials over $\mathbb{F}_{2^{n}}$ of the form $x^3+a(x^{2^s+1})^{2^k}+bx^{3\cdot 2^m}+c(x^{2^{s+m}+2^m})^{2^k}$,
where $n=2m$ with $m$ odd. They proved that such kind of quadrinomials can provide new almost perfect nonlinear (APN) functions
when  $\gcd(3,m)=1$, $ k=0 $,  and $(s,a,b,c)=(m-2,\omega, \omega^2,1)$ or $((m-2)^{-1}~{\rm mod}~n,\omega, \omega^2,1)$ in which
$\omega\in\mathbb{F}_4\setminus \mathbb{F}_2$. By taking $a=\omega$ and $b=c=\omega^2$, we observe that such kind of quadrinomials can be
rewritten as $a {\rm Tr}^{n}_{m}(bx^3)+a^q{\rm Tr}^{n}_{m}(cx^{2^s+1})$,  where $q=2^m$ and $ {\rm Tr}^n_{m}(x)=x+x^{2^m} $ for $ n=2m$.
Inspired by the quadrinomials and our observation, in this paper we study a class of functions with the form  $f(x)=a{\rm Tr}^{n}_{m}(F(x))+a^q{\rm Tr}^{n}_{m}(G(x))$ 
and  determine the APN-ness of this new kind of functions, where $a \in \mathbb{F}_{2^n} $ such that $ a+a^q\neq 0$, and  both $F$ and $G$ are 
quadratic functions over $\mathbb{F}_{2^n}$.  We first obtain a characterization of the conditions for $f(x)$ such  that $f(x) $ is an APN function. 
With the help of this characterization,  we obtain an  infinite family of APN functions for $ n=2m $ with $m$ being an odd positive 
integer: $ f(x)=a{\rm Tr}^{n}_{m}(bx^3)+a^q{\rm Tr}^{n}_{m}(b^3x^9) $, where $ a\in \mathbb{F}_{2^n}$  such that $ a+a^q\neq 0 $ and $ b $ is a non-cube in $ \mathbb{F}_{2^n} $. 
We verify that the aforementioned APN quadrinomials are CCZ-inequivalent to any other known APN functions over $ \mathbb{F}_{2^{10}} $.  We also obtain two infinite families of APN functions: $ a{\rm Tr}^n_{m}(bx^{3})+ a^q{\rm Tr}^n_{m}(gx^{5}+ex^{4q+1}) $, where   $ b,~g,~e $ satisfy:
$i)$ $ b $ not a cube, $ g=1 $, $ e=\frac{1}{b^{2q-2}} $; or $ii)$ $ b $ not a cube, and $ g=e=b $. We can also find (at least) two new sporadic  instances of APN  functions over $ \mathbb{F}_{2^{10}} $ up to CCZ-equivalence.

\noindent {\bf Keywords:} APN functions; relative trace functions; quadratic functions; CCZ-equivalence
\medskip

\section{Introduction}
Throughout this paper, we often identify the finite field $\mathbb{F}_{2^{n}}$ with $\mathbb{F}^{n}_{2}$ which is the $n$-dimensional vector space over $\mathbb{F}_{2}$. Any function $F: \mathbb{F}_{2^{n}}\rightarrow\mathbb{F}_{2^{m}}$ is called an {\it $(n,m)$-function} or vectorial Boolean functions if the values $ n $ and $ m $ are omitted. Vectorial Boolean functions are of critical importance in the field of symmetric cryptography, and the security of encryption algorithms heavily depends on the cryptographic properties of the vectorial Boolean functions. Researchers have proposed various properties to measure the resistance of a vectorial Boolean function to different kinds of cryptanalysis, including differential uniformity, nonlinearity, boomerang uniformity, algebraic degree, and so on. The lower the differential uniformity of a vectorial Boolean function, the better its security against differential cryptanalysis. In this paper, we mainly focus on the $ (n,n) $-functions. The differential uniformity of any such functions is at least 2, and the functions achieving this bound are called almost perfect nonlinear~(APN).

It is difficult to find new infinite families of APN functions up to CCZ-equivalence. Up to now, only 6 infinite families of APN monomials and 14 infinite families of APN polynomials are known, since the early 90's. On the other hand, in contrast to these facts, there are a lot of APN functions even over ``small'' field: for example, thousands of CCZ-inequivalent APN functions have been found over $ \mathbb{F}_{2^8}$ \cite{Yu-Wang-Li-2014}. Constructing new instances of infinite families is an area of deep heading research. We present Tables I and II including all currently known infinite families of APN functions. To Table II, we add the new function found with Theorem \ref{thm 3.1} in Section 3 below. We refer the readers to a recent nice work of Budaghyan et al. for more details on the classification of the known families of  APN functions   \cite{Budaghyan-Calderini-Villa-2020}.

\begin{table}[h]
	\centering
	\caption{Known infinite families of APN power functions over $ \mathbb{F}_{2^n} $}
	\label{table1} 
	\centering
	\begin{tabular}{|m{40pt}|m{98pt}|m{80pt}|m{80pt}|m{36pt}|}
		\hline
		Family & Exponent  & Conditions & Algebraic degree & Source \\  
	\hline
	Gold & $2^i+1$  &   ${\rm gcd}(i,n)=1 $& 2 &\cite{Gold-1968} \\
	\hline
	Kasami & $ 2^{2i}-2^i+1 $  & ${\rm gcd}(i,n)=1 $& $i+1$& \cite{Kasami-1971}\\
	\hline
	Welch & $ 2^t+3$  &  $ n=2t+1 $ & $3$& \cite{Dobbertin-1999} \\
	\hline
	Niho & \tabincell{l}{$ 2^t+2^{t/2}-1$, $t$ even\\ $2^t+2^{(3t+1)/2}-1$, $t$ odd } 
	&  $n=2t+1$ & \tabincell{l}{$t/2+1$\\$t+1$} &\cite{Dobbertin-1999-Niho} \\
		\hline
	Inverse & $ 2^{2t}-1$  &  $n=2t+1$&  $n-1$& \cite{Beth-Ding-1993, Nyberg-1994}\\
	\hline
	Dobbertin & $ 2^{4i}+2^{3i}+2^{2i}+2^i-1$  &  $ n=5i$ & $i+3$& \cite{Dobbertin-2001} \\
	\hline
	\end{tabular}
\end{table}

Throughout this paper, let $\omega\in \mathbb{F}_{4}
\backslash\{0,1\}.$ Very recently, Budaghyan, Helleseth, and Kaleyski introduced  an infinite family of quadrinomials over $ \mathbb{F}_{2^{n}} $ of the following form:
\begin{equation*}
g_{s}(x)=x^3+a(x^{2^s+1})^{2^k}+bx^{3\cdot 2^m}+c(x^{2^{s+m}+2^m})^{2^k},
\end{equation*}
where $ n=2m $. They showed that  this family can provide new infinite families of APN functions \cite{Budaghyan-Helleseth-Kaleyski-2020}.  More precisely, they showed that $ g_{s}(x) $ is a new APN function if $ k=0 $, $ (s,a,b,c)=(m-2,\omega, \omega^2,1)$, or $((m-2)^{-1}~{\rm mod}~n,\omega, \omega^2,1) $,  if $ m $ is odd with $ {\rm gcd}(3,m)=1 $. They also pointed out that when $ k\geq 1 $, $ g_{s}(x) $ can also be APN, however, CCZ-equivalent to some known ones. 

Let $ n=2m $ and $ q=2^m $. In this paper, our motivation is to find new infinite families of APN functions over $ \mathbb{F}_{2^n} $. We revisit the above-mentioned two infinite families of APN quadrionomials  obtained in \cite{Budaghyan-Helleseth-Kaleyski-2020}. Observing that for any odd positive integer $ s $, $ \omega^{2^s}=\omega^2$, the  APN functions for $ s=m-2 $, or $ (m-2)^{-1} {\rm mod}~n$ can be rewritten as $  g_{s}(x)=a {\rm Tr}^{n}_{m}(bx^3)+a^q{\rm Tr}^{n}_{m}(cx^{2^s+1})$, $ a=\omega$, $b=c=\omega^2 $. Here $ {\rm Tr}^n_{m}(x):=x+x^{2^m} $ for $ n=2m $. Inspired by  the quadrinomials and our  observation, let $ a\in \mathbb{F}_{2^n} $, we study a class of functions with the following form:
\begin{equation}\label{f(x)}
f(x)=a {\rm Tr}^{n}_{m}(F(x))+a^q{\rm Tr}^{n}_{m}(G(x)), ~a+a^q\neq 0, 
\end{equation}
where $ F $ and $G$ are quadratic functions with $ F(0)=G(0)=0 $. 

Based on the framework (\ref{f(x)}), we carefully choose quadratic functions $ F $ and $ G $ for finding APN functions. We mainly consider two kinds of functions in (\ref{f(x)}) by setting $ F $ and $ G $ as follows.

$ i) $ $ F(x)=bx^3 $, $ G(x)=cx^{2^s+1} $;

$ ii) $ $ F(x)=bx^{2^i+1}+cx^{2^{i+m}+1} $, $ G(x)=gx^{2^s+1}+ex^{2^{s+m}+1} $, where $ b, c, g, e\in\mathbb{F}_{2^n} $, and $ i, s $ are positive integers.

Let $ n=2m  $ with $ m $ odd. Let $ a\in \mathbb{F}_{2^n} $, and 
\begin{equation*}
f_{s}(x)=a {\rm Tr}^{n}_{m}(bx^3)+a^q{\rm Tr}^{n}_{m}(cx^{2^s+1}), ~a+a^q\neq 0.
\end{equation*} 
We can find two more exponents $ s=3$, or $m+2 $, and the corresponding  conditions on the coefficients  such that $ f_{s}(x) $ is an APN function over $ \mathbb{F}_{2^n} $. Code isomorphism tests~(see Sec. 2 below)~indicate that for the exponent   $s=3$, the APN function found with Theorem \ref{thm 3.1}:
\begin{equation*}
 f_{3}(x)=a {\rm Tr}^{n}_{m}(bx^3)+a^q{\rm Tr}^{n}_{m}(b^3x^{9}),
\end{equation*} 
where $ b $ is a non-cube, is new up to CCZ-equivalence over $ \mathbb{F}_{2^{10}} $. We can also discover more coefficients for these two exponents  $ s=m-2 $, and  $ (m-2)^{-1} {\rm mod}~n$ discovered by Budaghyan et al. such that $ f_{s}(x) $ is APN without the assumption that $ {\rm gcd}(3,m)=1 $. In this way, some new instances of APN functions  over $ \mathbb{F}_{2^{10}} $ and $ \mathbb{F}_{2^{14}} $ of  the  form $ f_{s}(x)  $ can also be found.

Let $ n=2m  $, $ q=2^m $, $ a\in \mathbb{F}_{2^n} $, and 
\begin{eqnarray*}
h_{i,s, b,c,g,e}(x)=a{\rm Tr}^n_{m}(bx^{2^i+1}+cx^{2^{i+m}+1})+ a^q{\rm Tr}^n_{m}(gx^{2^{s}+1}+ex^{2^{s+m}+1}),~a+a^q\neq 0. 
\end{eqnarray*} 	 
 We can find two infinite families of APN functions as follows, by letting $ i=1 $, $ s=2 $, $ c=0 $.
\begin{eqnarray*}
h_{1,2, b,0,g,e}(x)=a{\rm Tr}^n_{m}(bx^{3})+ a^q{\rm Tr}^n_{m}(gx^{5}+ex^{4q+1}) ,
\end{eqnarray*} 	 
where $ a\in \mathbb{F}_{2^n} $ such that $ a+a^q\neq 0 $, $ m $ is odd, and $ b,~g,~e $ satisfy: 
$i)$ $ b $ not cube, $ g=1 $, $ e=\frac{1}{b^{2q-2}} $; or $ii)$ $ b $ not cube in $ \mathbb{F}^{\ast}_{2^n} $, and $ g=e=b $.  By means of the code isomorphism test, we find that these two classes of  APN functions are CCZ-inequivalent to each other, however, CCZ-equivalent to some functions in family F12 of  Taniguchi over $ \mathbb{F}_{2^{10}}$. The critical technique needed in the proof is to forge links between the cube-ness of some certain elements and the number of   solutions to the  equation of the following form:  
\begin{eqnarray*}
	Ax^3+Bx^2+B^qx+A^q=0.
\end{eqnarray*}

The rest of the paper is organized as follows. Some basic definitions are given in Section 2. We characterize the condition  for $ f(x) $ with the form (\ref{f(x)}) such that $ f(x) $  is an APN function over $ \mathbb{F}_{2^{n}} $, $ n=2m $.  In Section 3, we investigate the APN property of the functions with the form (\ref{f(x)}) by letting $ F $, $ G $ are both Gold functions or both quadratic binomials. We can find a new infinite family of APN quadrinomials, and  generalize the two  infinite families of APN functions found by Budaghyan et al. in \cite{Budaghyan-Helleseth-Kaleyski-2020}. We can find two infinite families of APN hexanomials, which computationally proved  that they belong to family F12 over $ \mathbb{F}_{2^{10}} $. We can also find (at least) two new APN instances  over $ \mathbb{F}_{2^{10}} $. A few concluding remarks are given in Section 4. 

\section{Preliminaries}
Let $\mathbb{F}_{2^{n}}$ be the finite field consisting of $2^{n}$ elements, then the group of units of $\mathbb{F}_{2^{n}}$, denoted by $\mathbb{F}^{\ast}_{2^{n}}$, is a cyclic group of order $2^{n}-1$. Let $\alpha \in  \mathbb{F}_{2^n} .$ It is called a {\it cube} in $\mathbb{F}_{2^n} $, if $ \alpha=\beta^3 $ for some $\beta \in  \mathbb{F}_{2^n} $; otherwise, it is called a {\it non-cube}. Let $m$ and $n$ be two positive integers satisfying $m~|~n$, we use ${\rm Tr}^{n}_{m}(\cdot)$ to denote the {\it trace function} form $\mathbb{F}_{2^{n}}$ to $\mathbb{F}_{2^{m}}$, i.e., $ 	{\rm Tr}^{n}_{m}(x)=x+x^{2^m}+x^{2^{2m}}+\cdots+x^{2^{(n/m-1)m}}.$

Let $ f(x) $ be a function over $ \mathbb{F}_{2^n} $. Then it can be uniquely represented as $ f(x)=\sum^{2^n-1}_{i=0}a_{i}x^i $. This is the {\it univariate~representation} of $ f $. Let $ 0 \leq i\leq 2^n-1 $. The {\it binary~weight} of $ i $ is  $ w_{2}(i)=\sum^{n-1}_{s=0}i_{s}$, where $ i=\sum^{n-1}_{s=0}i_{s}2^s $, $ i_{s}\in \{0,1\} $. The {\it algebraic~degree} of $ f $, denoted by $ {\rm deg}(f) $, is the largest binary weight of an exponent $ i $ with $ a_{i}\neq 0 $ in the univariate representation of $ f $. Functions of algebraic degree one, and two are called {\it affine}, {\it quadratic}, respectively.

Given an $ (n,n) $-function $ F $, we denote by  $ \Delta_{F}(a,b) $ the number of solutions to the equation $ D_{a}F(x)=b $, where $ D_{a}F(x)=F(x)+F(x+a) $ is the \emph{derivative} of $ F $ in direction $ a\in \mathbb{F}_{2^n} $. $ F $ is called \emph{differentially $ \delta $-uniform} if the largest value of $ \Delta_{F}(a,b) $ equals to $\delta$, for every nonzero $ a $ and every $ b $. If  $ F $ is  differentially 2-uniform, we say that $ F $ is \emph{almost perfect nonlinear}~(APN).

Two $(n,m)$-functions $F$ and $G$ are called {\it extended affine equivalent} (EA-equivalent) if there exist some affine permutation $L_1$ over $\mathbb{F}_{2^n}$ and some affine permutation $L_2$ over $\mathbb{F}_{2^m}$, and some affine function $A$ such that $F=L_2\circ G\circ L_1+A$. They are called {\it Carlet-Charpin-Zinoviev equivalent} (CCZ-equivalent) if there exists some affine automorphism $L=(L_1, L_2)$ of $\mathbb{F}_{2^n}\times \mathbb{F}_{2^m}$, where $L_1: \mathbb{F}_{2^n}\times \mathbb{F}_{2^m}\rightarrow \mathbb{F}_{2^n}$ and $L_2: \mathbb{F}_{2^n}\times \mathbb{F}_{2^m}\rightarrow \mathbb{F}_{2^m}$ are affine functions, such that $y=G(x)$ if and only if $L_2(x, y)=F\circ L_1(x, y)$. It is well known that EA-equivalence is a special kind of CCZ-equivalence, and that  CCZ-equivalence preserves the differential uniformity \cite{CCZ}. Proving CCZ-inequivalence of functions can be very difficult in general, and this is resolved through code isomorphism. Let $ \alpha $ be the primitive element in $ \mathbb{F}_{2^n} $. Then two $ (n,n) $-functions functions $ F $ and $ G $ are CCZ-equivalent if and only if $\mathcal{C}_{F}$,  $\mathcal{C}_{G}$ are isomorphic \cite{Bracken-Byrne-Markin-McGuire-2008}, where $\mathcal{C}_{F}$ is the linear code corresponding to  $ F $ with the generating matrix as follows.
\begin{equation*}       
\mathcal{C}_{F}=\left(                
\begin{array}{cccc}   
1 & 1 & \cdots & 1\\ 
0 & \alpha & \cdots & \alpha^{2^n-1}\\ 
F(0) & F(\alpha) & \cdots & F(\alpha^{2^n-1})\\ 
\end{array}
\right)          
\end{equation*}

Let $ f $ be a quadratic function over $ \mathbb{F}_{2^n} $ with $f(0)=0 $. Denote 
\begin{equation*}
\Delta_{d,f}(x):=f(dx)+f(dx+d)+f(d).
\end{equation*}
Then it is well known that $ f $ is APN if and only if for every $ d\neq 0 $, $ \Delta_{d,f}(x)=0$ only has trivial solutions in $ x $, i.e., only $ x\in \mathbb{F}_{2}$ can be a solution to  $ \Delta_{d,f}(x)=0$.

In the following, we determine the APN-ness of the functions with the form (\ref{f(x)}).

\begin{lem}\label{fundamental-lemma} Let $ n=2m $, and $ q=2^m $. Let $ F $, $ G $ be  quadratic functions over $ \mathbb{F}_{2^n} $ satisfying that  $ F(0)=0 $, and $ G(0)=0 $. Let $ f(x)=a{\rm Tr}^{n}_{m}(F(x))+a^q{\rm Tr}^{n}_{m}(G(x)),$ where $ a\in \mathbb{F}_{2^n} $ such that $ a+a^q\neq 0 $. Then $ f(x)  $ is APN over $ \mathbb{F}_{2^n} $, if and only if  the following system
\begin{eqnarray}\label{fundamental}
\begin{cases}
	  \Delta_{d,F}(x) \in  \mathbb{F}_{2^m} &\\  \Delta_{d,G}(x) \in  \mathbb{F}_{2^m} &
\end{cases}
\end{eqnarray} 
only has  $ x=0, 1 $ as its solutions for any $ d \neq 0 \in\mathbb{F}_{2^n}	 $.
\end{lem}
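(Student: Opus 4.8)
The plan is to work from the characterization of APN-ness for quadratic functions recalled just before the lemma: a quadratic $f$ with $f(0)=0$ is APN if and only if for every $d\neq 0$, the equation $\Delta_{d,f}(x)=0$ has only $x\in\mathbb{F}_2$ as solutions. So I would start by computing $\Delta_{d,f}(x)$ for $f(x)=a{\rm Tr}^{n}_{m}(F(x))+a^q{\rm Tr}^{n}_{m}(G(x))$. Since $x\mapsto a{\rm Tr}^{n}_{m}(F(x))$ and $x\mapsto a^q{\rm Tr}^{n}_{m}(G(x))$ are themselves quadratic with value $0$ at $0$, and since $\Delta_{d,\cdot}(x)$ is additive in the function, one gets
\[
\Delta_{d,f}(x)=a\,\Delta_{d,\,{\rm Tr}^{n}_{m}\circ F}(x)+a^q\,\Delta_{d,\,{\rm Tr}^{n}_{m}\circ G}(x).
\]
The next observation is that ${\rm Tr}^{n}_{m}$ is $\mathbb{F}_{2^m}$-linear (indeed $\mathbb{F}_2$-linear and commuting with the relevant maps in the way needed), so $\Delta_{d,\,{\rm Tr}^{n}_{m}\circ F}(x)={\rm Tr}^{n}_{m}\big(\Delta_{d,F}(x)\big)$ and likewise for $G$. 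Hence
\[
\Delta_{d,f}(x)=a\,{\rm Tr}^{n}_{m}\!\big(\Delta_{d,F}(x)\big)+a^q\,{\rm Tr}^{n}_{m}\!\big(\Delta_{d,G}(x)\big).
\]

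Now $\Delta_{d,f}(x)=0$ is an equation over $\mathbb{F}_{2^n}$, while ${\rm Tr}^{n}_{m}(\Delta_{d,F}(x))$ and ${\rm Tr}^{n}_{m}(\Delta_{d,G}(x))$ both lie in $\mathbb{F}_{2^m}$. Writing $u={\rm Tr}^{n}_{m}(\Delta_{d,F}(x))\in\mathbb{F}_{2^m}$ and $v={\rm Tr}^{n}_{m}(\Delta_{d,G}(x))\in\mathbb{F}_{2^m}$, the condition becomes $au+a^qv=0$ with $u,v\in\mathbb{F}_{2^m}$. I would then argue that, because $a+a^q\neq 0$, the only way a relation $au+a^qv=0$ with $u,v\in\mathbb{F}_{2^m}$ can hold is $u=v=0$: taking this equation together with its $q$-th power $a^q u+a v=0$ (using $u^q=u$, $v^q=v$) gives a linear system in $(u,v)$ with determinant $a^2+a^{2q}=(a+a^q)^2\neq 0$, forcing $u=v=0$. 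Therefore $\Delta_{d,f}(x)=0$ is equivalent to the simultaneous conditions ${\rm Tr}^{n}_{m}(\Delta_{d,F}(x))=0$ and ${\rm Tr}^{n}_{m}(\Delta_{d,G}(x))=0$.

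The last step is to translate ${\rm Tr}^{n}_{m}(z)=0$ into membership: for $n=2m$, ${\rm Tr}^{n}_{m}(z)=z+z^{q}$, which vanishes precisely when $z=z^{q}$, i.e. when $z\in\mathbb{F}_{2^m}$. Applying this to $z=\Delta_{d,F}(x)$ and $z=\Delta_{d,G}(x)$ shows that $\Delta_{d,f}(x)=0$ holds iff $\Delta_{d,F}(x)\in\mathbb{F}_{2^m}$ and $\Delta_{d,G}(x)\in\mathbb{F}_{2^m}$, which is exactly system \eqref{fundamental}. Combining with the quadratic APN criterion, $f$ is APN iff \eqref{fundamental} has only $x=0,1$ as solutions for every $d\neq 0$, proving the lemma. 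The only genuinely delicate point is the middle step — verifying that $au+a^qv=0$ with $u,v\in\mathbb{F}_{2^m}$ and $a+a^q\neq0$ forces $u=v=0$ — but this is handled cleanly by the $2\times 2$ linear-algebra argument over $\mathbb{F}_{2^m}$ just described; the rest is routine bookkeeping with the additivity of derivatives and the $\mathbb{F}_2$-linearity of ${\rm Tr}^{n}_{m}$.
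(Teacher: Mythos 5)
Your proof is correct and follows essentially the same route as the paper's: reduce to $\Delta_{d,f}(x)=0$ via the quadratic APN criterion, pull ${\rm Tr}^n_m$ through the derivative, and use $a+a^q\neq 0$ to force both traces to vanish, finally identifying ${\rm Tr}^n_m(z)=0$ with $z\in\mathbb{F}_{2^m}$. Your $2\times 2$ determinant $(a+a^q)^2$ is just a repackaging of the paper's step of adding the equation to its $q$-th power, so there is no substantive difference.
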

\begin{proof}
 Since $ f(x)  $ is quadratic with $ f(0)=0 $, it is equivalent  to showing  that the following equation only has  $ x= 0,1 $ as its solutions for any $d\neq 0 $
\begin{equation}\label{f-1}
\Delta_{d,f}(x)=f(dx)+f(dx+d)+f(d)=0.
\end{equation}
We have 
\begin{equation}\label{f-2}
\Delta_{d,f}(x)=a{\rm Tr}^{n}_{m}(\Delta_{d,F}(x))+a^q{\rm Tr}^{n}_{m}(\Delta_{d,G}(x))=0.
\end{equation}	
In the following, we shall show that (\ref{f-2}) holds if and only if 
\begin{equation*}
{\rm Tr}^{n}_{m}(\Delta_{d,F}(x))={\rm Tr}^{n}_{m}(\Delta_{d,G}(x))=0.
\end{equation*}	
The sufficiency is clear. Let us show the necessity. 
	
Raising (\ref{f-2}) to its $ q $-th power, we have 
\begin{equation}\label{f-3}
a^q{\rm Tr}^{n}_{m}(\Delta_{d,F}(x))+a{\rm Tr}^{n}_{m}(\Delta_{d,G}(x))=0.
\end{equation}		
Adding (\ref{f-2}) and (\ref{f-3}), 
\begin{equation*}
(a+a^q){\rm Tr}^{n}_{m}(\Delta_{d,F}(x))+(a+a^q){\rm Tr}^{n}_{m}(\Delta_{d,G}(x))=0,
\end{equation*}		
which infers, since  $  a+a^q\neq 0$, that
\begin{equation}\label{f-4}
{\rm Tr}^{n}_{m}(\Delta_{d,F}(x))={\rm Tr}^{n}_{m}(\Delta_{d,G}(x)).	
\end{equation}		
Substituting (\ref{f-4}) into (\ref{f-2}), we can obtain
\begin{equation*}
{\rm Tr}^{n}_{m}(\Delta_{d,F}(x))={\rm Tr}^{n}_{m}(\Delta_{d,G}(x))=0, 	
\end{equation*}		
which is exactly the system (\ref{fundamental}). Therefore, $ f(x) $ is APN, if and only if the system  (\ref{fundamental}) only has trivial solutions $ x=0,1 $, for any $ d\neq 0 $.	
\end{proof}

\begin{table}[h]
	\centering
	\caption{Known infinite families of quadratic APN polynomials over $ \mathbb{F}_{2^n} $}
	\label{table2} 
	\centering
	\begin{tabular}{|m{20pt}|m{150pt}|m{170pt}|m{35pt}|}
	\hline
          ID & Functions  & Conditions & Source \\  
	\hline
	F1-F2 & $x^{2^s+1}+u^{2^k-1}x^{2^{ik}+2^{mk+s}}$  &   $ n=pk $, $ {\rm gcd}(k,p)={\rm gcd}(s,pk)=1 $, $ p\in \{3,4\} $, $ i=sk~{\rm mod}~p $, $ m=p-i $, $ n\geq 12 $, $ u $ primitive in $ \mathbb{F}^{\ast}_{2^n} $& \cite{Budaghyan-Carlet-Leander-2008}\\
	\hline
	F3 & $ sx^{q+1}+x^{2^i+1}+x^{q(2^i+1)}+dx^{2^iq+1}+d^qx^{2^i+q} $  & $ n=2m $, $ q=2^m $, $ {\rm gcd}(i,m)=1 $, $ d\in \mathbb{F}_{2^n} $, $ s\in \mathbb{F}_{2^n}\backslash\mathbb{F}_{2^m}$, $ X^{2^i+1}+dX^{2^i}+d^qX+1 $ has no solution $ x $ s.t. $ x^{q+1}=1 $& \cite{Budaghyan-Carlet-2008,Budaghyan-Calderini-Villa-2020}\\
	\hline
	F4 & $ x^3+a^{-1}{\rm Tr}^{n}_{1}(a^3x^9)$  &  $ a\neq 0 $ & \cite{Budaghyan-Carlet-Leander-2009}\\
			\hline
	F5  & $ x^3+a^{-1}{\rm Tr}^{n}_{3}(a^3x^9+a^6x^{18})$  &  $ 3~|~n $, $ a\neq 0 $ & \cite{Budaghyan-Carlet-Leander-2009-w}\\
	\hline
	F6  & $ x^3+a^{-1}{\rm Tr}^{n}_{3}(a^6x^{18}+a^{12}x^{36})$  &  $ 3~|~n $, $ a\neq 0 $ & \cite{Budaghyan-Carlet-Leander-2009-w}\\
	\hline
	F7-F9  & $ ux^{2^s+1}+u^{2^k}x^{2^{-k}+2^{k+s}}+vx^{2^{-k}+1}+\omega u^{2^k+1}x^{2^{s}+2^{k+s}}$  &  $n=3k$, $ {\rm  gcd}(k,3)={\rm  gcd}(s,3k)=1$, $v$, $\omega \in \mathbb{F}_{2^k}$, $v\omega \neq 1$, $ 3~|~(k+s) $, $ u $ primitive  in $ \mathbb{F}^{\ast}_{2^n} $& \cite{Bracken-Byrne-Markin-McGuire-2008,Bracken-Byrne-Markin-McGuire-2011}\\
	\hline
	F10 & $ cx^{q+1}+dx^{2^i+1}+d^qx^{q(2^i+1)}+\sum^{m-1}_{s=1}\gamma_{s}x^{2^s(q+1)} $  & $ n=2m $, $ q=2^m $, $ {\rm gcd}(i,m)=1 $, $ i $, $ m $ odd, $ \gamma_{s}\in \mathbb{F}_{q} $, $c \notin \mathbb{F}_{q}$, $ d $ not a cube & \cite{Bracken-Byrne-Markin-McGuire-2008}\\		
	\hline	
	F11 & $ (x+x^q)^{2^k+1}+u^{\prime}(ux+u^qx^q)^{(2^k+1)2^i}+u(x+x^q)(ux+u^qx^q) $  & $ n=2m $, $m\geq 2$ even, $ {\rm gcd}(k,m)=1 $, $ q=2^m $,  and $ i\geq 2 $ even, $ u $ primitive in $ \mathbb{F}^{\ast}_{2^n} $, $ u^{\prime}\in \mathbb{F}_{2^m} $ not a cube     & \cite{Zhou-Pott-2013}\\
	\hline
	F12 & $ u(u^qx+ux^q)(x+x^q)+(u^qx+ux^q)^{2^{2i}+2^{3i}}+\alpha(u^qx+ux^q)^{2^{2i}}(x+x^q)^{2^i}+\beta(x+x^q)^{2^{i}+1}$ & $ n=2m $, $q=2^m$, $ {\rm gcd}(i,m)=1 $, $ u $ primitive in $ \mathbb{F}^{\ast}_{2^n} $, $ \alpha $, $ \beta\in \mathbb{F}_{2^m} $, and $ X^{2^i+1}+\alpha X+\beta $ has no solution in $ \mathbb{F}_{2^m}  $& \cite{Taniguchi-2019}\\ 
	\hline
	F13 & $ L(x)^{2^i}x+L(x)x^{2^i} $ & $ n=km $, $ m\geq 2 $, $ {\rm gcd}(n,i)=1 $, $ L(x)=\sum^{k-1}_{j=0}a_{j}x^{2^{jm}} $ satisfies the conditions in Theorem 6.3 of \cite{Budaghyan-Calderini-Carlet-Coutter-Villa-2020} & \cite{Budaghyan-Calderini-Carlet-Coutter-Villa-2020}  \\ 
    \hline
	F14 & $ x^3+\omega x^{2^s+1}+\omega^2x^{3q}+x^{(2^s+1)q} $ & $ n=2m $, $q=2^m$, $ m$ odd, $ 3 \nmid m $, $\omega $ primitive in $ \mathbb{F}^{\ast}_{2^2} $, $ s=m-2 $, $ (m-2)^{-1}~{\rm mod}~n $  & \cite{Budaghyan-Helleseth-Kaleyski-2020}\\ 
	\hline
   F15 & $ a{\rm Tr}^{n}_{m}(bx^3)+a^q{\rm Tr}^{n}_{m}(b^3x^9) $ & $ n=2m $, $ m$ odd, $q=2^m$, $ a \notin \mathbb{F}_{q} $, $ b $ not a cube  & new \\ 
	\hline
	\end{tabular}
\end{table}

\section{Three infinite families of APN functions}

We want to  find new APN functions of the form (\ref{f(x)}). In the following two subsections, the functions   $ F $ and $ G $ were chosen very carefully  to satisfy the conditions characterized  in Lemma \ref{fundamental-lemma}. This will yield a new infinite family of APN quadrinomails, two infinite families of APN hexanomials,   and (at least) two sporadic APN functions CCZ-inequivalent to any other known APN functions over $ \mathbb{F}_{2^{10}} $.

\subsection{F, G are both of Gold type}

We need the following two lemmas, which will be used in the proof of Theorem \ref{thm 3.1}.

\begin{lem}\label{lemma} Let $ n=2m $ for $ m $ odd, $ q=2^m $. Suppose that for some $ c\in \mathbb{F}_{2^n} $ we have 
\begin{equation*}
c^3(c+c^2+c^4)^q\in \mathbb{F}_{2^m}.	
\end{equation*}		
Then $ c $ is a cube in $ \mathbb{F}_{2^n} $.	
\end{lem}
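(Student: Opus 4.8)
The plan is to turn the divisibility-type hypothesis into a cubic equation over $\mathbb{F}_{2^m}$, reduce it to a quadratic of a very rigid shape, and then read off a cube root of $c$ explicitly.

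First I would handle the degenerate cases. If $c=0$ it is a cube. If $c\neq 0$ but $y:=c+c^2+c^4=0$, then $c^3=c+1$, and a short computation gives $c^7=1$; since $\gcd(7,3)=1$, cubing is bijective on $\langle c\rangle$, so $c$ is a cube. So assume $c\neq 0$ and $y\neq 0$. Raising $c^3y^q\in\mathbb{F}_{2^m}$ to the $q$-th power and using $y^{q^2}=y$ gives $c^{3q}y=c^3y^q$, hence $(c^3/y)^{q-1}=1$, i.e. $c^3=\lambda y$ with $\lambda\in\mathbb{F}_{2^m}^{\ast}$. Dividing by $c^2$ turns this into $c+c^{-1}+c^{-2}=\lambda^{-1}=:\nu\in\mathbb{F}_{2^m}^{\ast}$, so $c$ is a root of $P(X)=X^3+\nu X^2+X+1\in\mathbb{F}_{2^m}[X]$. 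Because $m$ is odd, $2^m\equiv-1\pmod 3$, so $3\nmid q-1$; hence if $c\in\mathbb{F}_{2^m}$ then $c$, lying in the cyclic group $\mathbb{F}_{2^m}^{\ast}$ of order coprime to $3$, is automatically a cube. From now on $c\notin\mathbb{F}_{2^m}$.

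Next I would exploit that $c\notin\mathbb{F}_{2^m}$: its minimal polynomial over $\mathbb{F}_{2^m}$ is $X^2+TX+N$ with $T={\rm Tr}^{n}_{m}(c)$, $N=c^{q+1}$, and it divides $P(X)$. Writing $P(X)=(X^2+TX+N)(X+\gamma)$ and comparing the constant and linear coefficients forces $\gamma=N^{-1}$ and $T=N+N^2$; thus $c^2+(N+N^2)c+N=0$. Letting $\mu\in\mathbb{F}_{2^m}^{\ast}$ be the unique square root of $N$ and substituting $c=\mu e$, this becomes the clean equation $e^2+(\mu^3+\mu)e+1=0$. Since $\mu$ is a cube in $\mathbb{F}_{2^m}^{\ast}$ (again $3\nmid q-1$), it suffices to show $e$ is a cube in $\mathbb{F}_{2^n}$. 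To do that I would produce an explicit cube root: $X^2+\mu X+1=0$ has a root $f\in\mathbb{F}_{2^n}$ because ${\rm Tr}^{n}_{1}(\mu^{-2})={\rm Tr}^{m}_{1}({\rm Tr}^{n}_{m}(\mu^{-2}))={\rm Tr}^{m}_{1}(0)=0$; then $f\notin\mathbb{F}_{2^m}$ (as $\mu\neq 0$), so $f^{q+1}=1$ since the roots multiply to $1$, and $f+f^{-1}=\mu$. The characteristic-$2$ identity $(f+f^{-1})^3=f^3+f^{-3}+(f+f^{-1})$ gives $f^3+f^{-3}=\mu^3+\mu$, so $f^3$ is a root of $X^2+(\mu^3+\mu)X+1=0$; its two roots being $e$ and $e^{-1}$, we conclude $e=f^{\pm 3}$, a cube, and hence $c=\mu e$ is a cube in $\mathbb{F}_{2^n}$.

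The step I expect to be the real obstacle is finding the normalization $c=\mu e$ with $\mu^2=N=c^{q+1}$: this is precisely what converts the constraint ``$c+c^{-1}+c^{-2}\in\mathbb{F}_{2^m}$'' into the form $e+e^{-1}=\mu^3+\mu$ on which the cube identity bites. Everything else — the two degenerate cases and the minimal-polynomial bookkeeping that yields ${\rm Tr}^{n}_{m}(c)=N+N^2$ — is routine once that shape has been identified.
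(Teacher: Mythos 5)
Your proof is correct, and it reaches the conclusion by a genuinely different route in the one place where the paper leans on an external result. Both arguments funnel the hypothesis into the same identity $c+c^q=c^{q+1}+c^{2(q+1)}$ (your $T=N+N^2$ with $N=c^{q+1}$) and then into the same normalized equation $e+e^{-1}=\mu+\mu^3$ with $\mu^2=c^{q+1}$ and $e=c/\mu$; this is precisely the paper's polar decomposition $c=vk$, $v^{q-1}=1$, $k^{q+1}=1$, with $v=\mu$ and $k=e$, and its equation $k+k^{-1}=v+v^3$. You derive the identity by factoring $X^3+\nu X^2+X+1$ against the minimal polynomial of $c$ over $\mathbb{F}_{2^m}$, while the paper expands $c^3(c+c^2+c^4)^q$ and imposes $z+z^q=0$; both are routine, though yours requires the separate case $c+c^2+c^4=0$ (which you dispose of via $c^7=1$, and which in fact cannot occur with $c\notin\mathbb{F}_{2^m}$). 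The real divergence is the last step: the paper simply cites \cite[Theorem 7]{KimMesnager2020} to conclude that $k$ is a cube in the unit circle, whereas you reprove the needed special case from scratch by solving $f^2+\mu f+1=0$ in $\mathbb{F}_{2^n}$ (legitimate, since ${\rm Tr}^{n}_{1}(\mu^{-2})=0$ for $\mu\in\mathbb{F}_{2^m}^{\ast}$) and using $(f+f^{-1})^3=f^3+f^{-3}+(f+f^{-1})$ to identify $e$ with $f^{\pm 3}$. Your version buys a self-contained lemma at the cost of an extra construction; the paper's buys brevity at the cost of a black-box citation. One small repair is needed: the parenthetical justification ``$f\notin\mathbb{F}_{2^m}$ (as $\mu\neq 0$)'' is not valid --- $X^2+\mu X+1$ can perfectly well split over $\mathbb{F}_{2^m}$ for suitable $\mu\in\mathbb{F}_{2^m}^{\ast}$, in which case $f^{q+1}=f^2\neq 1$. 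Fortunately neither of these two side claims is load-bearing: your chain of deductions uses only $f+f^{-1}=\mu$, which follows directly from the quadratic, and $f\notin\mathbb{F}_{2^m}$ is then forced a posteriori because $e=f^{\pm 3}=c/\mu\notin\mathbb{F}_{2^m}$. With that parenthetical deleted or corrected, the argument is complete.
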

\begin{proof} Since $ {\rm gcd}(3, 2^m-1)=1$, any element of $ \mathbb{F}_{2^m}  $ is a cube. In the following, we assume that $ c\notin \mathbb{F}_{2^m} $. Noting that $ c^3(c+c^2+c^4)^q=c^{(q+1)+2}+c^{2(q+1)+1}+c^{3(q+1)+q}  $, we have  $ c^{q+1}(c+c^q)^2+c^{2(q+1)}(c+c^q)+c^{3(q+1)}(c+c^q)=0 $ by the assumption that $ c^3(c+c^2+c^4)^q \in \mathbb{F}_{2^m} $. Since $ c+c^q\neq 0 $, we have $ c^{q+1}(c+c^q)+c^{2(q+1)}+c^{3(q+1)}=0 $, and hence
$ c+c^q=c^{q+1}+c^{2(q+1)} $. Note that any nonzero element $ c$ of $ \mathbb{F}_{2^n} $ has a unique polar decomposition of the form $ c=vk $, where $ k^{q+1}=1 $, and $ v^{q-1}=1 $. Substituting $ c=vk $ into  $ c+c^q=c^{q+1}+c^{2(q+1)} $, we have $ k+k^{-1}=v+v^3 $. By assumption that $ c\notin \mathbb{F}_{2^m} $, we have $ k\neq 1 $. Then according to \cite[Theorem 7]{KimMesnager2020}, we have that $ k $ is a cube in $ U:=\{x\in \mathbb{F}_{2^n}~|~x^{q+1}=1\}  $. Therefore, $ c=vk $ is a cube in $ \mathbb{F}_{2^n} $. \end{proof}

Let $ s $ be a positive integer with $ {\rm gcd}(s,n)=1 $. Let $ x\in \mathbb{F}_{2^n} $. It is clear that $ x+x^{2^s} \neq 0 $, if and only if $ x\neq 0,1 $. We have the following lemma.

\begin{lem}\label{lemma-2} Let $ n=2m $ for $ m $ odd with $ {\rm gcd}(3,m)=1 $. Let s be a positive integer such that $ 3s \equiv 1 ~{\rm mod}~n $. Suppose that for some $ x\in \mathbb{F}_{2^n}\backslash
	\{0,1\} $,  we have 
\begin{equation*}
\frac{x+x^2}{(x+x^{2^s})^{2^{2s}-2^s+1}} \in \mathbb{F}_{2^m}.
\end{equation*}		
Then $x+x^{2^s} $ is a cube.	
\end{lem}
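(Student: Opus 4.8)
The plan is to reduce the claim to Lemma~\ref{lemma} by setting $c:=x+x^{2^s}$. Since $\gcd(s,n)=1$ and $x\neq 0,1$, the element $c$ is nonzero; moreover, if $c\in\mathbb{F}_{2^m}$ then $c$ is automatically a cube (because $\gcd(3,2^m-1)=1$), so I may assume $c\notin\mathbb{F}_{2^m}$, i.e., $c+c^q\neq 0$, where $q=2^m$. The only arithmetic inputs I need are consequences of $3s\equiv 1\pmod n$ (which is solvable precisely because $\gcd(3,m)=1$): writing $3s=1+tn$ gives, in $\mathbb{Z}/(2^n-1)$, the congruences $2^{3s}\equiv 2$ and $(2^s+1)(2^{2s}-2^s+1)\equiv 2^{3s}+1\equiv 3$; in particular $x^{2^{3s}}=x^2$ for all $x\in\mathbb{F}_{2^n}$, and $s$ is odd since $3s$ is odd.

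First I would record the telescoping identity
\[
c+c^{2^s}+c^{2^{2s}}=(x+x^{2^s})+(x^{2^s}+x^{2^{2s}})+(x^{2^{2s}}+x^{2^{3s}})=x+x^{2^{3s}}=x+x^{2},
\]
which turns the hypothesis into $\dfrac{c+c^{2^s}+c^{2^{2s}}}{c^{2^{2s}-2^s+1}}\in\mathbb{F}_{2^m}$. Since $\mathbb{F}_{2^m}$ is a field, a natural normalization is to raise this element to its $(2^s+1)$-th power and invoke $(2^s+1)(2^{2s}-2^s+1)\equiv 3\pmod{2^n-1}$, obtaining $\dfrac{(c+c^{2^s}+c^{2^{2s}})^{2^s+1}}{c^3}\in\mathbb{F}_{2^m}$. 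Writing $P:=c+c^{2^s}+c^{2^{2s}}$ and applying $z\mapsto z^{2^s}$ (together with $x^{2^{3s}}=x^2$ once more) gives $P^{2^s}=c^{2^s}+c^{2^{2s}}+c^{2}=P+c+c^{2}$, so the condition simplifies to $\dfrac{P^{2}+P(c+c^{2})}{c^{3}}\in\mathbb{F}_{2^m}$.

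The remaining step, which I expect to be the real obstacle, is to show that this last condition forces $c^{3}(c+c^{2}+c^{4})^{q}\in\mathbb{F}_{2^m}$, i.e., after cancelling the factor $c+c^{q}\neq 0$ exactly as in the proof of Lemma~\ref{lemma}, the relation $c+c^{q}=c^{q+1}+c^{2(q+1)}$ (equivalently, under the polar decomposition $c=vk$ with $v^{q-1}=1$, $k^{q+1}=1$, $k\neq 1$, the equation $k+k^{-1}=v+v^{3}$). Concretely I would apply the Frobenius $z\mapsto z^{q}$ to $\dfrac{P^{2}+P(c+c^{2})}{c^{3}}\in\mathbb{F}_{2^m}$, clear denominators, and reduce the resulting polynomial identity in $c$—which involves the twists $c^{2^s},c^{2^{2s}}$ only through $P$—to the clean quadratic-in-$k$ equation above; the congruence $2^{3s}\equiv 2$ and the cancellation of $c+c^{q}$ are the two levers, and if raising to the $(2^s+1)$-th power above discards too much one instead works directly with the equality $A=A^{q}$ for $A=P/c^{2^{2s}-2^s+1}$. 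Once that equation is in hand, Lemma~\ref{lemma} (ultimately \cite[Theorem~7]{KimMesnager2020}) gives that $c=x+x^{2^s}$ is a cube in $\mathbb{F}_{2^n}$, which is the assertion; the only loose end, the case $c\in\mathbb{F}_{2^m}$, was disposed of at the outset.
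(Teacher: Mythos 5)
Your preliminary computations are all correct (the telescoping identity $c+c^{2^s}+c^{2^{2s}}=x+x^2$, the congruence $(2^s+1)(2^{2s}-2^s+1)\equiv 3 \pmod{2^n-1}$, and $P^{2^s}=P+c+c^2$), but the proof stops exactly where the real work begins, and the route you sketch for the remaining step cannot close as stated. Two concrete problems. First, raising the hypothesis to the $(2^s+1)$-th power destroys precisely the information you need: since $s$ is odd we have $3\mid 2^s+1$, so $P^{2^s+1}$ is a cube automatically, as is $c^3$; hence the statement $\frac{P^{2^s+1}}{c^3}\in\mathbb{F}_{2^m}$ carries no cube-class information about $c$ whatsoever (it is a quotient of two cubes by a cube of $\mathbb{F}_{2^m}$ no matter what $c$ is). Second, the target you aim for --- the hypothesis $c^3(c+c^2+c^4)^q\in\mathbb{F}_{2^m}$ of Lemma \ref{lemma}, which involves the fixed exponents $1,2,4$ --- is structurally different from anything your hypothesis produces, which only involves the twisted combination $c+c^{2^s}+c^{2^{2s}}$; no implication between the two is given, and you explicitly defer this step (``the real obstacle'') without an argument. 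So the proposal has a genuine gap rather than being a complete proof by a different method.

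The paper's own proof is self-contained and does not pass through Lemma \ref{lemma} at all. Setting $d=x+x^{2^s}$ and $A=d^{2^s-1}$, one computes directly
\begin{equation*}
\frac{x+x^2}{d^{2^{2s}-2^s+1}}=\frac{d+d^{2^s}+d^{2^{2s}}}{d^{2^{2s}-2^s+1}}=A^{-2^s}+A^{-2^s+1}+A=\frac{(A+1)^{2^s+1}}{A^{2^s}}+1,
\end{equation*}
so the hypothesis gives $\frac{(A+1)^{2^s+1}}{A^{2^s}}=\alpha\in\mathbb{F}_{2^m}$. The crucial point is that the $(2^s+1)$-th power sits only in the numerator: if $A\neq 1$ then $\alpha\neq 0$ and $A^{2^s}=(A+1)^{2^s+1}\alpha^{-1}$ is a cube (both factors are cubes, using $3\mid 2^s+1$ and $\gcd(3,2^m-1)=1$), hence $A$ is a cube, hence $d$ is a cube because $\gcd(3,2^s-1)=1$; the case $A=1$ forces $d=1$. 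If you want to salvage your approach you must work with the untransformed quotient $P/c^{2^{2s}-2^s+1}$ and isolate a factor whose cube class agrees with that of $c$ --- which is exactly what the substitution $A=d^{2^s-1}$ accomplishes.
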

\begin{proof}
Let $d=x+x^{2^s} $. Then $ d\neq 0 $, since $ x\neq 0, 1 $, and $ {\rm gcd}(s,n)=1 $. We can express $ x+x^2=d+d^{2^s}+d^{2^{2s}}$. Then 
\begin{equation*}
\frac{x+x^2}{(x+x^{2^s})^{2^{2s}-2^s+1}}=\frac{d+d^{2^s}+d^{2^{2s}}}{d^{2^{2s}-2^s+1}}=d^{-2^s(2^s-1)}+d^{-(2^s-1)^2}+d^{2^s-1}=A^{-2^s}+A^{-2^s+1}+A,
\end{equation*}			
where $ A=d^{2^s-1} $. Then the condition of this lemma is equivalent to that $ A^{-2^s}+A^{-2^s+1}+A+1\in \mathbb{F}_{2^m}, $	which is exaclty 
\begin{equation*}
\frac{(A+1)^{2^s+1}}{A^{2^s}}\in \mathbb{F}_{2^m}.
\end{equation*}		
If $ A=1 $, i.e., $ d^{2^s-1}=1 $, then $ d=1 $, and hence $x+x^{2^s}=1$ is a cube. In fact, since $ {\rm gcd}(2^s-1,2^n-1)=1 $, $ g(x)=x^{2^s-1} $ is a permutation of $ \mathbb{F}_{2^n} $. Then by $g(d)=g(1)=1 $, we have $ d=1 $. If $ A\neq 1 $, then there exists some $\alpha\in \mathbb{F}^{\ast}_{2^m}$ such that $ A^{2^s}=(A+1)^{2^s+1}\alpha $. Since $ s $ is odd, $ 3 ~|~2^s+1 $, we have $ A^{2^s+1}\alpha $ is a cube, and hence   $ A^{2^s} $ is a cube, that is, $ A $ is a cube. However, note that $ {\rm gcd}(3,2^s-1)=1 $, we have that $ d $  is a cube, when $A=d^{2^s-1} $ is.
\end{proof}

In the following theorem, we investigate the APN property of the functions with the form (\ref{f(x)}) by letting  $ F(x)=bx^3 $, and $ G(x)=cx^{2^s+1} $. This allows us to find a new infinite family of APN quadrinomials $f(x)=a{\rm Tr}^{n}_{m}(bx^3)+a^q{\rm Tr}^{n}_{m}(b^3x^9) $, where $ b $ is a non-cube in $ \mathbb{F}_{2^n} $.

\begin{thm}\label{thm 3.1} Let $n=2m$ with $ m \geq 1 $ odd, and $ q=2^m $. Let $ a\in \mathbb{F}_{2^n} $, and $ f_{s}(x)=a{\rm Tr}^n_{m}(bx^3)+ a^q{\rm Tr}^n_{m}(cx^{2^{s}+1})$ with $ a\notin \mathbb{F}_{q} $, $ bc\neq 0 $, $ s $   odd. Then $ f_{s}(x) $ is APN over $ \mathbb{F}_{2^n} $, if $s, b, c$ satisfy  the following
	
i) $ s=m-2 $, $ b $ not a cube, $  \frac{c^4}{b}\in \mathbb{F}_{2^m} $; or

ii) $ s=(m-2)^{-1}~{\rm mod}~n$, $ b $  not a  cube, $  \frac{c^{2^s-1}}{b^{2^{2s}}}\in \mathbb{F}_{2^m} $; or

iii) $ s=3 $, $ b $ not a cube, $  \frac{c}{b^3}\in \mathbb{F}_{2^m} $; or

iv) ${\rm  gcd}(3,m)=1$, $ 3s \equiv 1 {\rm ~mod~} n$, $ b $ not a cube,  $  \frac{c}{b^{2^{2s}-2^s+1}}\in \mathbb{F}_{2^m} $; or

v)  $ s=m$, $ b $ not a cube,  $ c \notin \mathbb{F}_{2^m};$ or

vi) $ s=m+2 $, $ b $ not a cube, $  bc\in \mathbb{F}_{2^m} $; or

vii) $ s=n-1$, $\frac{c^2}{b}\notin \mathbb{F}_{2^m} $.

\end{thm}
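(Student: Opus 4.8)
The plan is to apply Lemma~\ref{fundamental-lemma} to $F(x)=bx^3$ and $G(x)=cx^{2^s+1}$. Expanding the derivative polynomials over $\mathbb{F}_2$ gives, for every $d\neq 0$,
\[
\Delta_{d,F}(x)=bd^3(x^2+x),\qquad \Delta_{d,G}(x)=cd^{2^s+1}(x^{2^s}+x),
\]
so by Lemma~\ref{fundamental-lemma} the function $f_s$ is APN exactly when, for every $d\in\mathbb{F}_{2^n}^{\ast}$, the only $x$ satisfying both $bd^3(x^2+x)\in\mathbb{F}_{2^m}$ and $cd^{2^s+1}(x^{2^s}+x)\in\mathbb{F}_{2^m}$ are $x\in\{0,1\}$. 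I would then assume, toward a contradiction, that some $x\notin\{0,1\}$ (so $x^2+x\neq 0$) satisfies this system for some $d\neq 0$, and rule this out case by case.

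The first remark is uniform. Since $m$ is odd, ${\rm gcd}(3,2^m-1)=1$, so cubing permutes $\mathbb{F}_{2^m}^{\ast}$ and every element of $\mathbb{F}_{2^m}^{\ast}$ is a cube in $\mathbb{F}_{2^n}$; hence, taking a cube root in $\mathbb{F}_{2^m}$ of $bd^3(x^2+x)\in\mathbb{F}_{2^m}^{\ast}$, the element $b(x^2+x)$ is a cube in $\mathbb{F}_{2^n}$. Consequently, in cases (i)--(vi), where $b$ is not a cube, $x^2+x$ is \emph{not} a cube, and the whole point will be to deduce the opposite. Two cases I would dispatch directly because the second factor degenerates. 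For $s=m$ (case v): either $x^{2^s}+x=x+x^q=0$, forcing $x\in\mathbb{F}_{2^m}$ so that $x^2+x\in\mathbb{F}_{2^m}^{\ast}$ is a cube (contradiction), or $x^{2^s}+x\in\mathbb{F}_{2^m}^{\ast}$, and since also $d^{q+1}\in\mathbb{F}_{2^m}^{\ast}$ the second membership forces $c\in\mathbb{F}_{2^m}$, contradicting (v). For $s=n-1$ (case vii): here $x^{2^s}=\sqrt x$, $x^{2^s}+x=\sqrt{x^2+x}$ and $d^{2^s+1}=\sqrt{d^3}$, so with $T:=d^3(x^2+x)\neq 0$ the two memberships say $bT\in\mathbb{F}_{2^m}$ and $c\sqrt T\in\mathbb{F}_{2^m}$; squaring the latter and dividing by the former yields $c^2/b\in\mathbb{F}_{2^m}$, contradicting (vii) (note this case needs no cube hypothesis on $b$, matching its statement).

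In the remaining cases (i)--(iv) and (vi) we have ${\rm gcd}(s,n)=1$ — with the sole exception of (iii) when $3\mid m$, in which case $x^{2^s}+x=x^8+x=0$ forces $x\in\mathbb{F}_{8}\subseteq\mathbb{F}_{2^m}$, so $x^2+x$ is a cube and we are done — hence we may assume $x^{2^s}+x\neq 0$. The strategy is to eliminate $d$ between the two memberships and then substitute the coefficient condition to reach the hypothesis of Lemma~\ref{lemma} or Lemma~\ref{lemma-2}. Since $s$ is odd, $3\mid 2^s+1$, so $d^{2^s+1}$ is an integer power of $d^3$; expressing it this way using the first membership (for case (iv) one instead raises the second membership to the power $2^{2s}-2^s+1$ and invokes $3s\equiv 1\pmod n$, i.e.\ $(2^s+1)(2^{2s}-2^s+1)\equiv 3\pmod{2^n-1}$) and absorbing factors from $\mathbb{F}_{2^m}^{\ast}$ produces a relation of the form $\bigl(\text{coefficient}\bigr)\cdot\bigl(\text{rational function of }x\bigr)\in\mathbb{F}_{2^m}$. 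For case (iii) this relation is $\tfrac{c}{b^3}\cdot\tfrac{x^8+x}{(x^2+x)^3}\in\mathbb{F}_{2^m}$; since $c/b^3\in\mathbb{F}_{2^m}$ by hypothesis, $\tfrac{x^8+x}{(x^2+x)^3}\in\mathbb{F}_{2^m}$, and writing $c'=x^2+x$ and using the identity $c'+c'^2+c'^4=x^8+x$ this is exactly the condition $c'^3(c'+c'^2+c'^4)^q\in\mathbb{F}_{2^m}$ of Lemma~\ref{lemma}, which then gives that $x^2+x$ is a cube — contradicting the second paragraph. Case (iv) should reduce, via the hypothesis on $c/b^{2^{2s}-2^s+1}$, to $\tfrac{x+x^2}{(x+x^{2^s})^{2^{2s}-2^s+1}}\in\mathbb{F}_{2^m}$, so Lemma~\ref{lemma-2} (which uses ${\rm gcd}(3,m)=1$) gives that $x+x^{2^s}$, hence $x^2+x$, is a cube. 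For (i) ($s=m-2$) and (ii) ($s=(m-2)^{-1}\bmod n$) the analogous elimination, together with ${\rm gcd}(2^s-1,2^n-1)=1$ making $z\mapsto z^{2^s-1}$ a bijection (so that cube-ness passes between $x+x^{2^s}$ and $x^2+x$), should again deliver the hypothesis of Lemma~\ref{lemma} or Lemma~\ref{lemma-2}; and for (vi) ($s=m+2$, $bc\in\mathbb{F}_{2^m}$) a parallel computation should do the same. In every case we obtain a contradiction, so the system has no nontrivial solution and $f_s$ is APN.

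The step I expect to be the main obstacle is precisely this last matching: for each of $s=m-2$, $(m-2)^{-1}\bmod n$, $3$, the $3s\equiv 1$ case, and $m+2$, one must find the exact power-combination of the two memberships that, after substituting the stated coefficient condition, collapses onto the precise form required by Lemma~\ref{lemma} or Lemma~\ref{lemma-2}, together with the attendant polynomial identities (such as $c'+c'^2+c'^4=x^8+x$) and, for (i)--(ii), the transfer of cube-ness between $x+x^{2^s}$ and $x^2+x$. By contrast the reduction ``$b(x^2+x)$ is a cube'' and the degenerate cases (v) and (vii) are routine; the substance of the theorem lies in that exponent bookkeeping.
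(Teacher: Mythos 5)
Your framework is exactly the paper's: reduce via Lemma~\ref{fundamental-lemma} to the two memberships $bd^3(x^2+x)\in\mathbb{F}_{2^m}$, $cd^{2^s+1}(x^{2^s}+x)\in\mathbb{F}_{2^m}$, note that the first one already forces $b(x^2+x)$ to be a cube (so $x^2+x$ is a non-cube whenever $b$ is), and then eliminate $d$ case by case. Your treatments of the degenerate cases (v) and (vii) are correct and match the paper, and your routing of case (iii) to Lemma~\ref{lemma} via the identity $x+x^8=c'+c'^2+c'^4$ with $c'=x+x^2$, and of case (iv) to Lemma~\ref{lemma-2} via $(2^s+1)(2^{2s}-2^s+1)=2^{3s}+1\equiv 3$, is precisely what the paper does.

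The genuine gap is in cases (i), (ii) and (vi), which you propose to close by ``again delivering the hypothesis of Lemma~\ref{lemma} or Lemma~\ref{lemma-2}.'' Neither lemma applies there. For $s=m-2$ the elimination yields $\frac{x^q+x^4}{x+x^2}\in\mathbb{F}_{2^m}$, and for $s=m+2$ it yields $(x+x^2)(x+x^{4q})\in\mathbb{F}_{2^m}$ --- these are the same condition (one is the $q$-th power of the other up to a factor in $\mathbb{F}_{2^m}$), and it is neither the condition $c'^3(c'+c'^2+c'^4)^q\in\mathbb{F}_{2^m}$ of Lemma~\ref{lemma} nor the condition of Lemma~\ref{lemma-2} (which moreover requires $3s\equiv 1\pmod n$ and $\gcd(3,m)=1$, false for $s=m\pm 2$ in general). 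The implication ``$(x+x^2)(x+x^{4q})\in\mathbb{F}_{2^m}\Rightarrow x+x^2$ is a cube'' is a separate nontrivial fact that the paper imports as Lemma~1 of Budaghyan--Helleseth--Kaleyski \cite{Budaghyan-Helleseth-Kaleyski-2020}; you would have to cite or reprove it, and your proposed bijectivity of $z\mapsto z^{2^s-1}$ does not substitute for it (it relates nothing between $x+x^{2^s}$ and $x+x^2$). Likewise, for case (ii) the paper does not carry out an ``analogous elimination'' at all: it defers entirely to the proof of Theorem~2 of \cite{Budaghyan-Helleseth-Kaleyski-2020}, where the exponent $s=(m-2)^{-1}\bmod n$ requires a genuinely different and longer manipulation. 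So while you correctly predicted that the exponent bookkeeping is where the substance lies, the tools you name for (i), (ii), (vi) are the wrong ones, and the key cube-ness lemma for those cases is missing from your argument.
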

\begin{proof} 
Let $ F(x)=bx^3 $, $ G(x)=cx^{2^s+1} $. Then 	
\begin{equation*}
\Delta_{d,F}(x)=d^3b(x^2+x), ~{\rm and }~ \Delta_{d,G}(x)=d^{2^s+1}c(x^{2^s}+x).	
\end{equation*}		
 According to Lemma \ref{fundamental-lemma}, proving $ f_{s}(x) $ is an APN function over $ \mathbb{F}_{2^n} $ is equivalent to showing that  the system: $ \Delta_{d,F}(x) \in \mathbb{F}_{2^m}$, and $ \Delta_{d,G}(x)\in \mathbb{F}_{2^m} $ can only has trivial solutions $ x=0, 1 $ for any $ d\neq 0 $. Assume, to the contrary, that $ f_{s}(x) $ is not an APN function, when $ s, b, c $ satisfy   the conditions of one item in this theorem. Then the following system
\begin{eqnarray}\label{3}
\begin{cases}
	d^3b(x^2+x)=\alpha ,&\\ d^{2^s+1}c(x^{2^s}+x)=\beta. &
	\end{cases}
\end{eqnarray} 
has a non-trivial solution $ x\notin \mathbb{F}_{2} $ for some $ d\neq 0 $, where  $ \alpha,~\beta \in \mathbb{F}_{2^m} $ with $ \alpha \neq 0 $.

 Since $ m $ is odd, $ {\rm gcd}(3,2^m-1)=1 $, we have that  $ \alpha=e^3 $ for some $ e\in \mathbb{F}^{\ast}_{2^n} .$ Dividing both sides of the first equation in (\ref{3}) by $ e^3 $, we obtain that $ (d/e)^3b(x^2+x)=1 $. Dividing  both sides of the second equation in (\ref{3}) by $ e^{2^s+1} $, we have $ (d/e)^{2^s+1}c(x^{2^s}+x)=\beta e^{-(2^{s}+1)} $. Since $ s $ is odd, we have $ 3~|~2^s+1 $, and $ e^{2^s+1} \in \mathbb{F}_{2^m}.$  Therefore, the system (\ref{3}) has a non-trivial solution $ x\notin \{0,1\} $ if and only if the system 
\begin{eqnarray}\label{4}
\begin{cases}
d^3b(x^2+x)=1 ,&\\ d^{2^s+1}c(x^{2^s}+x)=\beta. &
\end{cases}
\end{eqnarray} 
has a solution for some $ d\in \mathbb{F}^{\ast}_{2^n}$ and $\beta \in \mathbb{F}_{2^m}. $

$ i) $ $ s=m-2 $, $ b $ is a non-cube in $ \mathbb{F}_{2^n} $ and $  \frac{c^4}{b}\in \mathbb{F}^{\ast}_{2^m} $.

 Raising the second equation in (\ref{4}) to its fourth power, we have $ d^{q+4}c^4(x^q+x^4)=\beta^4 $. From the first equation, we have $ d^3=\frac{1}{b(x^2+x)} $. Substituting this relation into the previous equation, we have $ d^{q+1}  \frac{c^4}{b} \frac{x^q+x^4}{x^2+x}\in \mathbb{F}_{2^m} $. Since $ d^{q+1}\in \mathbb{F}^{\ast}_{2^m}$, and $ \frac{c^4}{b}\in \mathbb{F}^{\ast}_{2^m} $ by assumption, we have $ \frac{x^q+x^4}{x+x^2}\in \mathbb{F}_{2^m} $. By  \cite[Lemma  1]{Budaghyan-Helleseth-Kaleyski-2020}, we have $ x+x^2 $ is a cube in $ \mathbb{F}_{2^n} $, and hence  $ b $ is a cube by  $ d^3b(x^2+x)=1 $, a contradiction to the assumption that $ b $ is a non-cube.

$ ii) $ $ s=(m-2)^{-1}~{\rm mod}~n  $, $ b $ is a non-cube in $ \mathbb{F}^{\ast}_{2^n} $ and $  \frac{c^{2^s-1}}{b^{2^{2s}}}\in \mathbb{F}^{\ast}_{2^m} $.

It can be seen from  the proof of Theorem 2 in \cite{Budaghyan-Helleseth-Kaleyski-2020} that the critical conditions ensuring the APN-ness of this $ f_{s}(x) $ are exactly that $ b $ is a non-cube in $ \mathbb{F}_{2^n} $ and $  \frac{c^{2^s-1}}{b^{2^{2s}}}\in \mathbb{F}^{\ast}_{2^m} $. We invite the readers to check it, and we omit the arguments here. 

$ iii) $ $ s=3 $, $ b $ is a non-cube in $ \mathbb{F}_{2^n} $ and $  \frac{c}{b^3}\in \mathbb{F}^{\ast}_{2^m} .$

It can be seen that in this case (\ref{4}) becomes 
\begin{eqnarray*}
\begin{cases}
d^3b(x^2+x)=1 ,&\\ d^{9}c(x^{8}+x)=\beta. &
\end{cases}
\end{eqnarray*} 
Substituting $ d^3=\frac{1}{b(x+x^2)} $ into the second equation of the above system, we have 
\begin{eqnarray*}
\frac{c}{b^3}\cdot \frac{x+x^8}{(x+x^2)^3}=\beta, 
\end{eqnarray*} 
which infers that $ \frac{x+x^8}{(x+x^2)^3}\in \mathbb{F}_{2^m} $, since  $ \frac{c}{b^3}\in \mathbb{F}^{\ast}_{2^m} $ by assumption. It implies that $ (x+x^2)^3(x+x^8)^q\in \mathbb{F}_{2^m} $. Denoting $ e=x+x^2 $, we have $ x+x^8=e+e^2+e^4 $, and  hence $ e^3(e+e^2+e^4)^q\in \mathbb{F}_{2^m} $. Now, according to Lemma \ref{lemma}, $  e=x+x^2 $ is a cube. Then $ b $ is a cube by $ d^3b(x+x^2)=1 $, which contradicts to the assumption that $ b $ is a non-cube.

$ iv)  $ ${\rm  gcd}(3,m)=1$, $ 3s \equiv 1 {\rm ~mod~} n$, $ b $ is a non-cube in $ \mathbb{F}_{2^n} $ and $  \frac{c^{2^{2s}-2^s+1}}{b}\in \mathbb{F}^{\ast}_{2^m} $.

Since $ {\rm gcd}(2^s-1,2^n-1)=2^{{\rm gcd}(s,n)}-1=1 $, we have that $ x+x^{2^s}\neq 0 $, when $ x\neq 0,1 $. Then (\ref{4}) becomes 
\begin{eqnarray*}
	\begin{cases}
		d^{2^{3s}+1}b(x+x^2)=1 ,&\\ d^{2^s+1}c(x+x^{2^s})=\beta, &
	\end{cases}
\end{eqnarray*} 
where $ \beta\in \mathbb{F}_{2^m} $ with $ \beta\neq 0 $, since $ x+x^{2^s}\neq 0 $. By the second equation, we have $ d^{2^s+1}=\frac{\beta}{c(x+x^{2^s})} $. Substituting this relation into the first equation, noting that $ 2^{3s}+1=(2^s+1)(2^{2s}-2^s+1) $, we have 
\begin{eqnarray*}
\frac{b}{c^{2^{2s}-2^s+1}}\cdot \frac{x+x^2}{(x+x^{2^s})^{2^{2s}-2^s+1}}\in \mathbb{F}_{2^m},
\end{eqnarray*} 
which infers, since $ \frac{b}{c^{2^{2s}-2^s+1}} \in \mathbb{F}^{\ast}_{2^m}$ by assumption, that 
\begin{eqnarray}\label{5}
\frac{x+x^2}{(x+x^{2^s})^{2^{2s}-2^s+1}}\in \mathbb{F}^{\ast}_{2^m}.
\end{eqnarray} 
Now, by the assumption that $ b $ is a non-cube in $ \mathbb{F}_{2^n} $ and $  \frac{c^{2^{2s}-2^s+1}}{b}\in \mathbb{F}^{\ast}_{2^m} $, we have that $ c $ is a non-cube. On the other hand, by (\ref{5}) and Lemma \ref{lemma-2}, we have that $ x+x^{2^s} $ is a cube, which infers that $ c $ is a cube from the second equation $ d^{2^s+1}c(x+x^{2^s})=\beta $ of the above system, a contradiction.

$ v) $  $ s=m$, $ b $ is a non-cube in $ \mathbb{F}_{2^n} $, and  $ c \notin \mathbb{F}_{2^m}.$

It can be seen that (\ref{4}) becomes 
\begin{eqnarray*}
	\begin{cases}
		d^{3}b(x+x^2)=1 ,&\\ d^{2^m+1}c(x+x^{2^m})=\beta, &
	\end{cases}
\end{eqnarray*} 
where $ \beta\in \mathbb{F}_{2^m} $. Since 
 $ c \notin \mathbb{F}_{2^m}$, and $ d^{2^m+1}\in \mathbb{F}^{\ast}_{2^m},$ $ x+x^{2^m}\in \mathbb{F}_{2^m} $ for any $ d\neq 0 $, $ x\in \mathbb{F}_{2^n}$, by the second equation, we have $ \beta $ must equal to  zero, which infers that $ x\in \mathbb{F}_{2^m} $. Then by the fact that  any element of $ \mathbb{F}_{2^m} $ is a cube, we have  $ d^3(x+x^2) $ is a cube in $ \mathbb{F}^{\ast}_{2^n} $,   which implies that $ b $ is a cube in $ \mathbb{F}^{\ast}_{2^n} $, a contradiction to the assumption that $ b $ is a non-cube.

$ vi) $ $ s=m+2 $, $ b $ is a non-cube in $ \mathbb{F}_{2^n} $ and $  bc\in \mathbb{F}^{\ast}_{2^m} $.
It can be seen (\ref{4}) becomes 
\begin{eqnarray*}
	\begin{cases}
		d^{3}b(x+x^2)=1 ,&\\ d^{4(q+1)-3}c(x+x^{4q})=\beta, &
	\end{cases}
\end{eqnarray*} 
where $ \beta\in \mathbb{F}_{2^m} $ with $ \beta\neq 0 $ since $ x+x^{4q}\neq 0 $ when $ x\neq 0, 1$. Since $ d^{3}b(x+x^2)=1 $, we have  $ d^{3}=\frac{1}{b(x+x^2)} $. Substituting this relation into the second equation, we have 
\begin{eqnarray*}
 d^{4(q+1)}bc(x+x^2)(x+x^{4q})=\beta.
\end{eqnarray*} 
Then by the assumption that $ bc\in \mathbb{F}^{\ast}_{2^m} $, we have  $ (x+x^2)(x+x^{4q})\in \mathbb{F}_{2^m} $. According to \cite[Lemma 1]{Budaghyan-Helleseth-Kaleyski-2020}, we have $ x+x^2\neq 0$ is a cube, which infers that $ b $ is a cube by  $ d^{3}b(x+x^2)=1 $, a contradiction to the assumption that $ b $ is a non-cube.

$ vii) $ $ s=n-1$, $\frac{c^2}{b}\notin \mathbb{F}_{2^m} $.

Since $ {\rm gcd}(2^s-1,2^n-1)=2^{{\rm gcd}(s,n)}-1=1 $, we have that $ x+x^{2^s}\neq 0 $, if $ x\neq 0,1 $.
It can be seen that (\ref{4}) becomes 
\begin{eqnarray*}
	\begin{cases}
		d^{3}b(x+x^2)=1 ,&\\ d^{2^s+1}c(x+x^{2^s})=\beta, &
	\end{cases}
\end{eqnarray*} 
where $ \beta\in \mathbb{F}_{2^m}$ with $ \beta \neq 0 $. Squaring  the second equation, we have 
$ d^{3}c^2(x+x^2)=\beta^2 $. Comparing with the first equation, we have $ \frac{c^2}{b}=\beta^2 \in \mathbb{F}_{2^m}$, which contradicts with the assumption that $ \frac{c^2}{b}\notin \mathbb{F}_{2^m} .$
\end{proof}
\begin{rmk} Code isomorphism tests described in Section 2 suggest that all the polynomials from the same item of Theorem \ref{thm 3.1} are all CCZ-equivalent;  the APN function  $x^3+\omega x^{2^s+1}+\omega^2 x^{3q}+x^{(2^s+1)q} $ discovered in \cite{Budaghyan-Helleseth-Kaleyski-2020} is CCZ-equivalent to all the functions in i), ii), respectively, for $ s=m-2 $, and $ s=(m-2)^{-1}~{\rm mod}~n $, if $ {\rm gcd}(3,m)=1 $; the polynomials $ f_{s}(x) $ for  $ s=m+2 $ in vi) are equivalent to the ones for $ s=m-2 $ in i);  the polynomials  $ f_{s}(x) $ for  $ s=m $  in v)  are equivalent to some functions in family F10 from Table \ref{table2}, see also the arguments in Remark \ref{rmk-f_{m}} below; the polynomial  $ f_{s}(x) $ for  $ s=n-1 $  in vii) is CCZ-equivalent to $ x^3 $. 
	
	The remaining value of $ s=3$ in iii) yields  APN quadrinomials $ f_{3}(x)$, which are CCZ-inequivalent to any currently known APN function over $ \mathbb{F}_{2^{10}} $. By the arguments above that all the polynomials in the same item are all CCZ-equivalent, we   only take a representative of iii). We let $ f_{3}(x)=\omega {\rm Tr}^{n}_{m}(bx^3)+\omega^2{\rm Tr}^{n}_{m}(b^3x^9) $, where $ b $ is  a non-cube, $ \omega \in \mathbb{F}_{2^2} \backslash \mathbb{F}_{2} $. We use this $ f_{3}(x) $ to compare against representatives from all the known infinite families including $ f_{s}(x)$, $s=m-2$, $(m-2)^{-1}~{\rm mod}~n $ in i), ii)   which are essentially due to Budaghyan, Helleseth, and Kaleyski~(\cite{Budaghyan-Helleseth-Kaleyski-2020}). Note that, Budaghyan et al. had  presented a table  listing all the representatives, except family F12, of all the known CCZ-inequivalent APN functions over $\mathbb{F}_{2^{10}}$, see Table III of \cite{Budaghyan-Helleseth-Kaleyski-2020}. To complete the work of code isomorphism test, we have to find all the representatives of F12 over $ \mathbb{F}_{2^{10}} $. Thanks to the nice work \cite{Kaspers-Zhou-2020}, we   can  obtain these representatives. In fact, let $ \gamma $ be a primitive element in $ \mathbb{F}^{\ast}_{2^5} $, according to \cite[Theorem 4.5]{Kaspers-Zhou-2020}, there are exactly 6 of CCZ-inequivalent Taniguchi APN functions from F12: $ i=1 $, take $ \alpha=1 $, $ \beta=1,~\gamma^7,~\gamma^{11} $; $ i=2 $, take $ \alpha=1 $, $ \beta=1,~\gamma^3, ~\gamma^{15} $. The notations $ i,~\alpha,~\beta $ used here are the same as the ones used in family F12 of Table \ref{table2}.
\end{rmk}

\begin{rmk} 
Let $ n=2m $ with $ m $ odd, and $ {\rm gcd}(m,3)=1 $. Let $ q=2^m $. Let $ z $ be a primitive element in $ \mathbb{F}^{\ast}_{2^n} $, and $ \omega=z^{\frac{2^n-1}{3}} $. Then $\omega$ is a primitive element in $ \mathbb{F}_{2^2} $. Let $ s=m-2 $ or $ (m-2)^{-1} {\rm ~mod}~n $. Then $   g_{s}(x)=x^3+\omega x^{2^s+1}+\omega^2 x^{3q}+x^{(2^s+1)q} $  is an APN function (\cite{Budaghyan-Helleseth-Kaleyski-2020}). It can be seen that $ g_{s}(x) $ can be covered by our theorem. In fact, noting that $ \omega^{2^s}=\omega^2 $ for any odd $ s $, $ g_{s}(x)=\omega{\rm Tr}^{n}_{m}(\omega^2 x^3)+\omega^2{\rm Tr}^{n}_{m}(\omega^2 x^{2^s+1})=a{\rm Tr}^{n}_{m}(b x^3)+a^q{\rm Tr}^{n}_{m}(c x^{2^s+1})$, where $ a=\omega, b=c=\omega^2$. It is clear that $ a+a^q=1\neq 0 $, and $ b=\omega$ is a non-cube since $ {\rm gcd}(m,3)=1 $, and $ \frac{c^4}{b}=1=\frac{c^{2^t-1}}{b^{2^{2t}}} $, where $ t=(m-2)^{-1} {\rm ~mod}~n$. Then by $ i) $, $ ii) $ of the above theorem, we have that $ g_{s}(x) $ is APN over $ \mathbb{F}_{2^n} $, for $ s=m-2 $, and $ (m-2)^{-1} {\rm ~mod}~n $, respectively. 
\end{rmk} 	

\begin{rmk}\label{f(m-2)}	
	Let $ n=2m $ with $ m $ odd. Let us investigate the APN property of $ f_{m-2}(x) $ further. A pair ($ b,c $) is said to satisfy property $\mathbf{P}_{m-2}$, if $ b $ is a cube in $ \mathbb{F}^{\ast}_{2^n} $, and $ c\in \mathbb{F}^{\ast}_{2^n} $ such that the following assertion holds:
	\begin{center}
		For any $ x\in \mathbb{F}_{2^n}$ with $x\neq 0,1 $,  
		$ x+x^2 $ is a non-cube in $ \mathbb{F}_{2^n}$, if $ \frac{c^4}{b}\cdot  \frac{x^q+x^4}{x+x^2} \in \mathbb{F}_{2^m}$.
	\end{center}
	\noindent Then $ f_{m-2}(x) $ is APN over $ \mathbb{F}_{2^n}$ for these $ b $, $ c $. In fact, this assertion can be seen from the proof of $ i) $ in the above theorem. With the help of computer, we find that when $ m=5$, $7 $, there exist a lot of pairs ($ b,c $) satisfying  $\mathbf{P}_{m-2}$. More precisely, let  $m=5$ or $ 7 $,  $ z $ be a primitive element in $ \mathbb{F}^{\ast}_{2^{2m}} $,
	$ j=\frac{(2^m+1)}{3} $, and $ U=\{(z^j)^i~|~{\rm gcd}(3,i)=1,~1\leq i\leq 2^n-1\}$. Then any pair ($ b,c $) with $ b\neq 0 $ a cube, and $\frac{c^4}{b}\in U$ satisfies $\mathbf{P}_{m-2}$. However, when $ m=9$, $11$, there does not exist such ($ b,c $). We therefore propose the following:\\
	
	{\bf Open Problem 1.}~~Does there  exist infinite odd  integer $ m \geq 1 $ such that  $\mathbf{P}_{m-2}$ holds?
	
\end{rmk}

\begin{rmk}\label{rmk-f_{m}} Let $ n=2m $ with $ m $ odd, and $ q=2^m $. Let us revisit the function $ f_{m}(x)=a{\rm Tr}^{n}_{m}(bx^{3})+a^q{\rm Tr}^{n}_{m}(cx^{2^m+1}) $ investigated in $ v) $. Replacing $ bx^3 $ by $ bx^{2^i+1} $, we let $ f(x)=a{\rm Tr}^{n}_{m}(bx^{2^i+1})+a^q{\rm Tr}^{n}_{m}(cx^{2^m+1})$, where $ i $ is an odd positive integer with $ {\rm gcd}(i,m)=1 $. With similar arguments, by  $ 3~|~2^i+1 $ and $ {\rm gcd}(i,m)=1 $, we can obtain that $ f(x) $ is APN, if $ b $ is a non-cube in $ \mathbb{F}_{2^n}$, and $ c\notin \mathbb{F}_{2^m} $. Note that $ \frac{1}{a}f(x)=dx^{2^m+1}+{\rm Tr}^{n}_{m}(bx^{2^i+1})$, where  $ d=a^{q-1}(c+c^q) $ can be chosen as any element in $  \mathbb{F}_{2^n}\backslash\mathbb{F}_{2^m} $, since $ a,~c\notin \mathbb{F}_{q} $, we have that $ f(x) $ in fact are exactly the functions in family F10 up to EA-equivalence. This observation suggests that it is worthy to finding APN functions with the following form: 
\begin{eqnarray}\label{f_{i,s}}
 f_{i,s}(x)=a{\rm Tr}^{n}_{m}(bx^{2^i+1})+a^q {\rm Tr}^{n}_{m}(cx^{2^s+1}),~\text{where~} a\in \mathbb{F}_{2^n}~such~that~ a+a^q\neq 0,~n=2m~\text{is a positive integer}.
\end{eqnarray} 	

\end{rmk}

\begin{rmk} It is noted  that there does not exist  elements satisfying the conditions in $ iv) $. However, we decide to preserve this item, because we feel that the technique used in the proof  may  provide some insights for the constructions of APN functions. 
\end{rmk}

\subsection{F, G are both quadratic binomials}

Let us consider more general case. Let $ n=2m $ with $ m $ a positive integer. Let
\begin{eqnarray}\label{h_{i,s,b,c,d,e}}
h_{i,s, b,c,g,e}(x)=a{\rm Tr}^n_{m}(bx^{2^i+1}+cx^{2^{i+m}+1})+ a^q{\rm Tr}^n_{m}(gx^{2^{s}+1}+ex^{2^{s+m}+1}) ,
\end{eqnarray} 	
where $ a\in \mathbb{F}_{2^n} $ such that $ a+a^q\neq 0 $, $ b,c,g,e\in \mathbb{F}_{2^n} $. 

In this subsection, we want to find APN functions of the form (\ref{h_{i,s,b,c,d,e}}). We remark first that the APN  polynomials considered in family F3 can be covered by $ h_{i,s, b,c,g,e}(x) $. In fact, let $ i=m$, $ b\notin \mathbb{F}_{2^m} $, $c=0 $, $ g=1 $, then (\ref{h_{i,s,b,c,d,e}}) becomes  $ a^{q-1}(b+b^q)x^{q+1}+x^{2^s+1}+x^{(2^s+1)q}+ex^{2^sq+1}+e^qx^{2^s+q} $, which are exactly the functions in F3, since $a^{q-1}(b+b^q)$ can be choosen as any elements in  $\mathbb{F}_{2^n}\backslash \mathbb{F}_{2^m}$. 

We can find two infinite families of APN functions  with the above form (\ref{h_{i,s,b,c,d,e}}), and  computationally prove that they are CCZ-inequivalent to any APN power functions over $ \mathbb{F}_{2^{10}} $, and we can find a new sporadic  instance  of APN functions over $ \mathbb{F}_{2^{10}} $.

\begin{thm}\cite{Williams}\label{Williams} Let $n=2m$, and $a\in \mathbb{F}^{\ast}_{2^n}$. Let $t_{1}$ be one solution in $\mathbb{F}_{2^n}$ of $t^2+at+1=0$ (if $ {\rm Tr}^{n}_{1}\Big(\frac{1}{a^2}\Big)=0 $). Let $f(x)=x^3+x+a$, then 

$\bullet$ $ f $ has no zeros in  $\mathbb{F}_{2^n}$ if and only if ${\rm Tr}^{n}_{1}\Big(\frac{1}{a^2}\Big)=0$, and $t_{1}$ is not a cube  in $\mathbb{F}_{2^n}$.

$\bullet$ $ f $ has three zeros in  $\mathbb{F}_{2^n}$ if and only if ${\rm Tr}^{n}_{1}\Big(\frac{1}{a^2}\Big)=0$, and $t_{1}$ is a cube  in $\mathbb{F}_{2^n}$.
\end{thm}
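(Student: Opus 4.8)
The plan is to reduce the cubic $f(x)=x^3+x+a$ to the quadratic $z^2+az+1=0$ through the classical substitution $x=y+y^{-1}$, and then to translate the resulting conditions into statements about cube-ness, exploiting throughout that $n=2m$ is even. As preliminaries I would first note that $f$ is separable: $f'(x)=x^2+1=(x+1)^2$ and $f(1)=a\neq 0$, so $\gcd(f,f')=1$, whence $f$ has $0$, $1$, or $3$ roots in $\mathbb{F}_{2^n}$; and that the two roots $t_1,t_1^{-1}$ of $T^2+aT+1$ (distinct since $a\neq 0$) lie in $\mathbb{F}_{2^n}$ exactly when $\mathrm{Tr}^n_1(1/a^2)=0$, with $t_1+t_1^{-1}=a$ and $t_1 t_1^{-1}=1$.

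Next I would carry out the substitution. Writing $x=y+y^{-1}$ one checks $x^3+x=y^3+y^{-3}$, so $f(x)=0$ is equivalent to $y^3+y^{-3}=a$, i.e. $z:=y^3$ satisfies $z^2+az+1=0$, hence $y^3\in\{t_1,t_1^{-1}\}$. This yields both directions of the cube criterion under the hypothesis $\mathrm{Tr}^n_1(1/a^2)=0$. On the one hand, if $t_1=y_1^3$ is a cube in $\mathbb{F}_{2^n}$, then $x_1:=y_1+y_1^{-1}\in\mathbb{F}_{2^n}$ and $x_1^3+x_1=t_1+t_1^{-1}=a$, so $f$ has a root. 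On the other hand, if $f$ has a root $r\in\mathbb{F}_{2^n}$, pick $y_0$ with $y_0+y_0^{-1}=r$; then $y_0^3\in\{t_1,t_1^{-1}\}\subseteq\mathbb{F}_{2^n}$. If $y_0\notin\mathbb{F}_{2^n}$ then its conjugate over $\mathbb{F}_{2^n}$ is $y_0^{-1}$ (the roots of $Y^2+rY+1$ multiply to $1$), so $y_0^{2^n}=y_0^{-1}$; combined with $y_0^3\in\mathbb{F}_{2^n}$ this gives $y_0^{-3}=y_0^3$, hence $y_0^6=1$, hence $y_0^3=1$ (squaring is injective in characteristic $2$), hence $t_1=1$ and $a=0$ --- a contradiction. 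So $y_0\in\mathbb{F}_{2^n}$ and $t_1=y_0^{\pm 3}$ is a cube. Thus, when $\mathrm{Tr}^n_1(1/a^2)=0$, $f$ has a root in $\mathbb{F}_{2^n}$ if and only if $t_1$ is a cube.

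It then remains to upgrade ``$f$ has a root'' to ``$f$ has three roots'' and to eliminate the case $\mathrm{Tr}^n_1(1/a^2)=1$. For the first point, if $r\in\mathbb{F}_{2^n}$ is a root I would factor $f(x)=(x+r)\bigl(x^2+rx+(1+r)^2\bigr)$ (using $a=r(1+r)^2$), observe via the partial fraction identity $1/a=1/r+1/(1+r)+1/(1+r)^2$ and $\mathrm{Tr}^n_1(u^2)=\mathrm{Tr}^n_1(u)$ that $\mathrm{Tr}^n_1(1/a^2)=\mathrm{Tr}^n_1(1/r)$, and note that the quadratic cofactor splits over $\mathbb{F}_{2^n}$ iff $\mathrm{Tr}^n_1\bigl((1+r)^2/r^2\bigr)=\mathrm{Tr}^n_1(1/r)+\mathrm{Tr}^n_1(1)=\mathrm{Tr}^n_1(1/r)=0$, where $\mathrm{Tr}^n_1(1)=0$ because $n$ is even. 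Hence a root together with $\mathrm{Tr}^n_1(1/a^2)=0$ forces three roots, while a root together with $\mathrm{Tr}^n_1(1/a^2)=1$ forces exactly one. For the second point I would use the norm-one subgroup $U=\{z\in\mathbb{F}_{2^{2n}}:z^{2^n+1}=1\}$: when $\mathrm{Tr}^n_1(1/a^2)=1$ the polynomial $T^2+aT+1$ is irreducible over $\mathbb{F}_{2^n}$, so $t_1^{2^n}=t_1^{-1}$ and $t_1\in U$; since $\gcd(3,2^n+1)=1$ for $n$ even, $t_1=y_1^3$ for some $y_1\in U$, and then $x_1:=y_1+y_1^{-1}$ is fixed by the $2^n$-power map, so $x_1\in\mathbb{F}_{2^n}$ and $f(x_1)=0$. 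So $f$ always has a root when $\mathrm{Tr}^n_1(1/a^2)=1$. Assembling the three cases gives: $f$ has no root iff $\mathrm{Tr}^n_1(1/a^2)=0$ and $t_1$ is not a cube, and $f$ has three roots iff $\mathrm{Tr}^n_1(1/a^2)=0$ and $t_1$ is a cube.

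I expect the main difficulty to be the bookkeeping of fields of definition: the substitution $x=y+y^{-1}$ lives over $\mathbb{F}_{2^{2n}}$, and the whole argument hinges on controlling exactly when the auxiliary element $y_0$ (or $y_1$) is rational over $\mathbb{F}_{2^n}$. This is precisely where evenness of $n$ is essential --- through $\mathrm{Tr}^n_1(1)=0$, through $3\mid 2^n-1$ (so cubing is $3$-to-$1$, not a bijection, on $\mathbb{F}_{2^n}^{\ast}$, which is what makes ``$t_1$ a cube'' a nonvacuous condition), and through $\gcd(3,2^n+1)=1$ (so cubing is a bijection on the norm-one circle group). A secondary annoyance is making the trace and partial-fraction manipulations precise enough that the splitting of the quadratic cofactor genuinely reduces to the single condition $\mathrm{Tr}^n_1(1/a^2)=0$.
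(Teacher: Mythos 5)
The paper does not prove this statement at all: it is imported verbatim from Williams' 1975 note (reference \cite{Williams}) and used as a black box in the appendix, so there is no in-paper proof to compare against. Your blind reconstruction is, as far as I can check, a correct and complete proof of the cited result, and it follows the standard Berlekamp--Rumsey--Solomon/Williams line: the substitution $x=y+y^{-1}$ turning $x^3+x=a$ into $y^3+y^{-3}=a$, hence $y^3\in\{t_1,t_1^{-1}\}$. The delicate points all hold up. Separability gives the $0/1/3$ trichotomy; the rationality argument for $y_0$ (if $y_0\notin\mathbb{F}_{2^n}$ then $y_0^{2^n}=y_0^{-1}$, forcing $y_0^6=1$ and hence $a=0$) correctly shows that a root of $f$ forces $t_1$ to be a cube of an element of $\mathbb{F}_{2^n}$ when $\mathrm{Tr}^n_1(1/a^2)=0$; the factorization $a=r(1+r)^2$ together with the partial-fraction identity $1/a=1/r+1/(1+r)+1/(1+r)^2$ and $\mathrm{Tr}^n_1(1)=0$ (here evenness of $n$ enters) correctly upgrades one root to three in the $\mathrm{Tr}=0$ case; and the circle-group argument with $\gcd(3,2^n+1)=1$ correctly shows $f$ has exactly one root when $\mathrm{Tr}^n_1(1/a^2)=1$, eliminating that case from both bullets. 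This is a clean, self-contained proof that the paper itself chose to outsource.
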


We need the following theorem, which will be used for generating APN functions (see Corollary \ref{corollary}). Let $ n=2m $ with $ m $ being an odd positive integer, and $ q=2^m $. Let $ x\in \mathbb{F}_{2^n}$ with $ x\neq 0, 1 $. Then fix the following notations for this given element $x$.
\begin{eqnarray*}
	&&r:=x^{q+1};~h:=x+x^q;~c:=x+x^2;\\
	&&D:=A(A^{q+1}+B^{q+1});~ H:=A^2(A^qB^3+AB^{3q}+B^{2+2q}),
\end{eqnarray*} 
where $ A, B  $ are some elements determined by $ x $. 
By a routine work, we have that
\begin{eqnarray*} 
	h+h^2=c+c^q.
\end{eqnarray*} 

The following result can not only give rise to APN functions of the form (\ref{h_{i,s,b,c,d,e}}) but   can also  yield Budaghyan-Carlet APN hexanomials  (family F3), and hence it has its own
importance and we state it as a theorem. The proof can be seen in the appendix. 

\begin{thm}\label{vital} Let $ n=2m $ with $ m $ being an odd positive integer. Let $x$ be any given element in $\mathbb{F}_{2^n} \backslash \{0,1\}$. Use the notations given as above. Let 
	\begin{eqnarray}\label{key-1}
f(y)=Ay^3+By^2+B^qy+A^q=0.
	\end{eqnarray} 
Then equation (\ref{key-1}) has no solutions in  $\mathbb{F}_{2^n}$, if A, B, c satisfy 

1) 	$ A=c^{2-2q}(h+c+c^2) $, $ B=c+c^2 $, and $ c=x+x^2$ is a non-cube in $ \mathbb{F}_{2^n}$; or 

2)  $ A=\frac{h+c+c^2}{c^q} $, $ B=1+c$, and $ c=x+x^2$ is a non-cube in $ \mathbb{F}_{2^n}$. 
\end{thm}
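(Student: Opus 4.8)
The plan is to invoke Williams' criterion (Theorem~\ref{Williams}) after bringing the cubic $f(y)=Ay^3+By^2+B^qy+A^q$ into the normal shape $z^3+z+\delta$. First I would record the non-degeneracies forced by the hypotheses: since $x\neq 0,1$ one has $c=x+x^2\neq 0$ and $h+c+c^2=x^4+x^q\neq 0$, so in both cases $A\neq 0$ and $f$ is genuinely cubic; moreover $B\neq 0$ because a non-cube $c$ cannot equal $1$; and a short computation shows that $B^2+AB^q=0$ or $A^{q+1}+B^{q+1}=0$ would force $c$ to lie in $\mathbb{F}_{2^m}$ or equal $1$, hence to be a cube, a contradiction, so $B^2+AB^q\neq 0$ and $D=A(A^{q+1}+B^{q+1})\neq 0$. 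Then the standard reduction applies: replacing $y$ by $y+B/A$ removes the quadratic term, giving $y^3+py+r_0$ with $p=(B^2+AB^q)/A^2$ and $r_0=(A^{q+1}+B^{q+1})/A^2$, and since $p\neq 0$ the substitution $y\mapsto p^{1/2}y$ produces $z^3+z+\delta$ with $\delta^2=r_0^2/p^3=D^2/(B^2+AB^q)^3$. By Theorem~\ref{Williams}, $f$ has no zero in $\mathbb{F}_{2^n}$ precisely when ${\rm Tr}^n_1(1/\delta^2)=0$ and a root $t_1$ of $t^2+\delta t+1=0$ is a non-cube.

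Next I would dispose of the trace condition by showing $1/\delta^2=(B^2+AB^q)^3/D^2\in\mathbb{F}_{2^m}$; granting this, ${\rm Tr}^n_1(1/\delta^2)={\rm Tr}^m_1\!\big({\rm Tr}^n_m(1/\delta^2)\big)={\rm Tr}^m_1(0)=0$ because ${\rm Tr}^n_m$ annihilates $\mathbb{F}_{2^m}$. Expanding the definitions of $D$ and $H$ gives the identity $H=(B^2+AB^q)^3+B^3D+B^6$, whence $1/\delta^2=H/D^2+\rho+\rho^2$ with $\rho=B^3/D$. Writing $\zeta=A^{q+1}+B^{q+1}\in\mathbb{F}_{2^m}$ and $\xi=A^qB^3+AB^{3q}+B^{2+2q}\in\mathbb{F}_{2^m}$, one has $D=A\zeta$ and $H=A^2\xi$, so $H/D^2=\xi/\zeta^2\in\mathbb{F}_{2^m}$, while $\rho+\rho^q=(\xi+B^{2+2q})/(\zeta A^{q+1})$ lies in $\mathbb{F}_{2^m}$; and $\rho+\rho^2\in\mathbb{F}_{2^m}$ is equivalent to $\rho+\rho^q\in\mathbb{F}_2$. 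The content of this step is the verification---using the explicit $A,B$ of case~1) (resp.\ case~2)) together with the relation $h^2+h=c+c^q$---that $\xi+B^{2+2q}$ is either $0$ or $\zeta A^{q+1}$, i.e.\ that $\rho+\rho^q\in\mathbb{F}_2$; this is exactly what the specific choices in 1) and 2) are designed to guarantee.

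The remaining, and decisive, step is to show that $t_1$ is a non-cube; note $t_1\in\mathbb{F}_{2^n}$ since ${\rm Tr}^n_1(1/\delta^2)=0$. I would carry $t_1$ backwards through the two substitutions and simplify using the formulas for $A,B$ and the relation $h^2+h=c+c^q$, the aim being to express $t_1$ as $c^{k}\theta^3$ for some $\theta\in\mathbb{F}_{2^n}^\ast$ and some integer $k$ coprime to $3$. Because $m$ is odd, $\gcd(3,2^m-1)=1$, so raising to a power coprime to $3$ (in particular the $(q-1)$-st power map on $\mathbb{F}_{2^n}^\ast$) is compatible with cube-ness; hence $t_1$ is a cube in $\mathbb{F}_{2^n}$ if and only if $c$ is. Since $c$ is assumed to be a non-cube, $t_1$ is a non-cube, and Theorem~\ref{Williams} yields that the equation $f(y)=0$ has no solution in $\mathbb{F}_{2^n}$, as claimed.

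I expect the main obstacle to be precisely this last identification of $t_1$ with $c$ modulo cubes (the trace verification being a close second): both are routine but lengthy field manipulations in which the exact shapes of $A$ and $B$ in 1) and 2) are essential, and the $t_1$--$c$ link is the conceptual heart of the theorem, since it is the point where ``$c$ is a non-cube'' is converted into ``the cubic is unsolvable over $\mathbb{F}_{2^n}$.'' As a side remark, the same result may be approached by observing that $y\mapsto y^{-q}$ permutes the zeros of $f$ in $\mathbb{F}_{2^n}$ (the coefficient tuple $(A,B,B^q,A^q)$ being conjugate-reciprocal) with fixed-point set the unit circle $U=\{y:y^{q+1}=1\}$, so that $f$ has a zero in $\mathbb{F}_{2^n}$ iff it has one in $U$---which is the very form of the condition defining family F3 and explains why Theorem~\ref{vital} also recovers the Budaghyan--Carlet hexanomials.
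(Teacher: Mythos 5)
Your overall strategy coincides with the paper's: pass to Williams' normal form $z^3+z+\delta=0$ via $y\mapsto y+B/A$ and $y=Ez$, then verify the two conditions of Theorem~\ref{Williams}. Your non-degeneracy checks and the identity $H=(B^2+AB^q)^3+B^3D+B^6$ (hence $1/\delta^2=H/D^2+\rho+\rho^2$ with $\rho=B^3/D$) are correct. However, your route to the trace condition is wrong as stated: you propose to prove $1/\delta^2\in\mathbb{F}_{2^m}$, equivalently $\rho+\rho^q\in\mathbb{F}_2$, and this is false. Take $n=6$, $m=3$, $\alpha^6=\alpha+1$, $x=\alpha$; then $c=\alpha^7$ is a non-cube, so case 1) applies, and one computes $A=\alpha^{56}$, $B=\alpha^{33}$, $A^{q+1}+B^{q+1}=h^5=\alpha^9$, $D=\alpha^2$, $\rho=\alpha^{34}$, and $\rho+\rho^{8}=\alpha^{34}+\alpha^{20}=\alpha^4+\alpha^3=\alpha^{9}\notin\mathbb{F}_2$. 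The extra claim is also unnecessary: since $\rho\in\mathbb{F}_{2^n}$ one has ${\rm Tr}^n_1(\rho+\rho^2)=0$ automatically, so ${\rm Tr}^n_1(1/\delta^2)={\rm Tr}^n_1(H/D^2)=0$ already follows from $H/D^2\in\mathbb{F}_{2^m}$ --- which is exactly the paper's Claim 1. So the trace condition survives, but not by the verification you describe, and no choice of $A,B$ is ``designed to guarantee'' $\rho+\rho^q\in\mathbb{F}_2$.

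The genuine gap is the non-cube condition on $t_1$, which you rightly call the decisive step but do not carry out. ``Carrying $t_1$ backwards through the substitutions'' has no concrete content: $t_1$ is defined only implicitly as a root of $t^2+\delta t+1=0$, and to decide its cube-ness you must first exhibit it. The paper does this by writing $1/\delta^2=v+v^2$ with $v=(B^3+M)/D$, where $M$ is an \emph{explicit} root of $M^2+DM+H=0$; then $t_1=\delta v$ satisfies $t_1^2=(B^3+M)^2/(AB^q+B^2)^3$, reducing everything to showing $B^3+M$ is a non-cube. Producing $M$ explicitly requires the further fact ${\rm Tr}^m_1(H/D^2)=0$, which the paper obtains from the (computer-discovered) identity $H/D^2=u+u^2$ with $u=\bigl(h^2(r+r^2)+r+r^4+hr^2\bigr)/h^5\in\mathbb{F}_{2^m}$; setting $M=Du$ one then computes $B^3+M$ in closed form and finds it equals $c^5$ times a product of elements of $\mathbb{F}_{2^m}^{\ast}$ (all cubes, since $\gcd(3,2^m-1)=1$ for odd $m$), hence a non-cube precisely when $c$ is. None of this appears in your proposal, and it constitutes essentially the entire technical content of the theorem; as written, your argument is a plan for a proof rather than a proof.
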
	
\begin{rmk} Let $ n=2m $, and $ q=2^m $. Recall first that the condition needed in family F3 is that 
\begin{eqnarray}\label{F3}
y^{2^i+1}+dy^{2^i}+d^qy+1=0
\end{eqnarray} 	
has no solutions in $ U=\{x\in \mathbb{F}_{2^n}~|~x^{q+1}=1\}$.  	
Here $ i $ is a positive integer with $ {\rm gcd}(i,m)=1 $. When $ i=1 $, this condition is exactly that $y^{3}+dy^{2}+d^qy+1=0$	
has no solutions in $ U $.

With the same notations as   in Theorem \ref{vital}. Let $ A $ be the elements given in 1) or 2). Let  $\Gamma=\{A\in \mathbb{F}^{\ast}_{2^m}~|~x\in \mathbb{F}_{2^n}\backslash \mathbb{F}_{2^m},~c=x+x^2~\text{not cube}\}$.  Numerical experiments suggest that $ \Gamma $ is always nonempty for any odd $ m $. This can yield Budaghyan-Carlet APN functions in family F3.    In fact, let $A\in \Gamma$, then (\ref{key-1}) becomes 
	\begin{eqnarray*}
		y^3+dy^2+d^qy+1=0,~d=\frac{B}{A}. 
	\end{eqnarray*} 
	According to Theorem \ref{vital}, the above equation has no solutions in  $ \mathbb{F}_{2^n} $.  Therefore, this theorem can be used to yield APN functions in family F3. It is noted  that the existence of the coefficients $ d $ such that  the equation (\ref{F3}) has no solutions in $ U $ (or $ \mathbb{F}_{2^n} $) for a given positive integer $ i $ had also been studied in \cite{Bluher-2013,Bracken-Tan-Tan-2014}.We expect that $\Gamma$ does indeed empty for any odd positive integer $ m $, and hence propose the following: 
	
	{\bf Open problem 2.} Let $ n=2m $ with $ m $ odd. Show that $ \Gamma $ is always nonempty. 
	
\noindent It is  also interesting and important to  consider the following question.

	{\bf Open problem 3.} Let $ n=2m $ with $ m $ a positive integer, $ q=2^m $. Let $ i $ be a positive with $ {\rm gcd}(m,i)=1 $. Find more exponents $ i $, and elements $ A, B $ such that the following equation has no solutions in $ \mathbb{F}_{2^n} $.
	\begin{eqnarray*}
		Ay^{2^i+1}+By^{2^i}+B^qy+A^q=0. 
	\end{eqnarray*} 	
\end{rmk}

In the following, we investigate the APN property of the functions with the form (\ref{h_{i,s,b,c,d,e}}) by letting $ i=1, c=0$. We does indeed find two infinite families of APN functions. But, astonishingly enough, the function obtained happened to be  CCZ-equivalent to some functions in family F12 with a completely different from that of Taniguchi.

\begin{corollary}\label{corollary} Let $n=2m$ be a positive integer with  $ m $ odd, and $ q=2^m $. Let $ h_{s}(x)=a{\rm Tr}^n_{m}(bx^3)+ a^q{\rm Tr}^n_{m}(gx^{2^{s}+1}+ex^{2^{s+m}+1})$ with $ a\notin \mathbb{F}_{q} $, $ bge\neq 0 $. Then $ h_{s}(x) $ is APN over $ \mathbb{F}_{2^n} $, if $ s, b, g, e $ satisfy  
	\begin{eqnarray*}
	&1)&~~~s=2,~b{\rm~is~not~a~cube},~g=1,~ e=\frac{1}{b^{2q-2}}; {~~\rm or} \\
	&2)&~~~s=2,~b{\rm~is~not~a~cube},~g=e=b.
	\end{eqnarray*}

\end{corollary}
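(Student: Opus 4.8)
The plan is to apply Lemma~\ref{fundamental-lemma} with $F(x)=bx^{3}$ and $G(x)=gx^{5}+ex^{4q+1}$ (note that $2^{s+m}+1=4q+1$ when $s=2$): then $h_{s}$ is APN iff for every $d\neq 0$ the system $\Delta_{d,F}(x)\in\mathbb F_{2^{m}}$, $\Delta_{d,G}(x)\in\mathbb F_{2^{m}}$ admits only $x\in\mathbb F_{2}$ as solutions. A direct computation gives $\Delta_{d,F}(x)=bd^{3}(x+x^{2})$ and $\Delta_{d,G}(x)=gd^{5}(x^{4}+x)+ed^{4q+1}(x^{4q}+x)$; writing $c=x+x^{2}$ and $h=x+x^{q}$ as in Theorem~\ref{vital}, I would use the identities $x^{4}+x=c^{2}+c$, $x^{4q}+x=c^{2q}+c^{q}+h$, $x^{4}+x^{q}=c^{2}+c+h$ and $h+h^{2}=c+c^{q}$. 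Suppose, for a contradiction, that some $d\neq 0$ and $x\notin\mathbb F_{2}$ solve the system. Then $\alpha:=bd^{3}c\in\mathbb F^{\ast}_{2^{m}}$, and since $\gcd(3,2^{m}-1)=1$ the element $\alpha$ is a cube; as $d^{3}$ is a cube, $bc=\alpha d^{-3}$ is a cube in $\mathbb F_{2^{n}}$, and hence $c=x+x^{2}$ is a \emph{non-cube} because $b$ is. Moreover, putting $w=d^{q-1}$ (so $w^{q+1}=1$), the first membership is equivalent to $w^{3}=(bc)^{1-q}$, i.e.\ $b^{1-q}=w^{3}c^{q-1}$.

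Next I would transform the second membership. Clearing the nonzero factor $d^{5}$ from $\Delta_{d,G}(x)+\Delta_{d,G}(x)^{q}=0$ — using $d^{4q+1}=d^{5}w^{4}$, $d^{5q}=d^{5}w^{5}$, $d^{q+4}=d^{5}w$ and $w^{q}=w^{-1}$ — one finds that it is equivalent to
\[
w^{5}\Psi^{q}+\Psi=0,\qquad \Psi:=e^{q}(c^{2}+c+h)\,w+g(c^{2}+c),
\]
the key point being the identity $g^{q}(c^{2}+c)^{q}w+e(c^{2q}+c^{q}+h)=w\Psi^{q}$. Now substitute the prescribed coefficients together with $b^{1-q}=w^{3}c^{q-1}$. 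In case~$1)$, where $g=1$ and $e=b^{-2(q-1)}$ (so $e^{q}=b^{2(q-1)}$), after eliminating the powers of $b$, clearing denominators and setting $v=w^{5}\neq 0$, the relation $w^{5}\Psi^{q}+\Psi=0$ becomes
\[
c^{2q-2}(c^{2q}+c^{q}+h)\,v^{3}+(c^{2q}+c^{q})\,v^{2}+(c^{2}+c)\,v+c^{2-2q}(c^{2}+c+h)=0 .
\]
In case~$2)$, where $g=e=b$, the same computation with $v=w^{2}\neq 0$ gives
\[
(c^{2q}+c^{q}+h)\,v^{3}+c^{1-q}(c^{2q}+c^{q})\,v^{2}+(c^{2}+c)\,v+c^{1-q}(c^{2}+c+h)=0 .
\]
In either case the involution $v\mapsto 1/v$ (valid since $v\neq 0$) rewrites the equation as $Ay^{3}+By^{2}+B^{q}y+A^{q}=0$ with precisely the pair $(A,B)=\bigl(c^{2-2q}(h+c+c^{2}),\,c+c^{2}\bigr)$ of Theorem~\ref{vital}~$1)$, respectively $(A,B)=\bigl(\tfrac{h+c+c^{2}}{c^{q}},\,1+c\bigr)$ of Theorem~\ref{vital}~$2)$ (after an overall nonzero rescaling in the second case).

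Since $c=x+x^{2}$ is a non-cube, Theorem~\ref{vital} now tells us that $Ay^{3}+By^{2}+B^{q}y+A^{q}=0$ has no solution in $\mathbb F_{2^{n}}$, contradicting the root $y=1/v$ just exhibited. Hence the system of Lemma~\ref{fundamental-lemma} has only the trivial solutions for every $d\neq 0$, and $h_{s}$ is APN.

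I expect the real work to lie in the middle step: one must track exponents modulo $2^{n}-1$ with care (e.g.\ $(b^{-2(q-1)})^{q}=b^{2(q-1)}$, $c^{(2-2q)q}\equiv c^{2q-2}$), check that the free parameter $b$ cancels completely once $b^{1-q}=w^{3}c^{q-1}$ is substituted — which it must, since Theorem~\ref{vital}'s $A,B$ depend only on $c$ and $h$ — and verify that the resulting coefficients coincide \emph{exactly} with Theorem~\ref{vital}'s two pairs and not merely with an equation of the same shape. It is also important that the reduction of the $G$-condition to $w^{5}\Psi^{q}+\Psi=0$, and then to the displayed cubic in $v$, be a chain of equivalences, so that the contradiction really follows.
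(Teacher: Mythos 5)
Your proposal is correct and follows essentially the same route as the paper's own proof: reduce via Lemma~\ref{fundamental-lemma}, pass to the unit circle (your $w=d^{q-1}$ plays the role of the paper's polar decomposition of $d^{5}$, resp.\ $d^{2}$), and arrive at the cubic $Ay^{3}+By^{2}+B^{q}y+A^{q}=0$ with exactly the pairs $(A,B)$ of Theorem~\ref{vital}, whose non-solvability for non-cube $c=x+x^{2}$ yields the contradiction. I checked your two displayed cubics in $v$ and the identification with Theorem~\ref{vital}'s coefficients (including the extra factor $c^{-1}$ needed in case~2), and they are all correct.
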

\begin{proof} 1) $ s=2 $,~$ b ${\rm~is~not~a~cube},~$ g=1 $,~ $ e=\frac{1}{b^{2q-2}} $. 
	
	Let $ F(x)=bx^3 $, $ G(x)=x^{2^s+1}+ex^{2^{s+m}+1} $. Then we have 
	\begin{eqnarray*}
		\Delta_{d,F}=d^3b(x+x^2), ~ \Delta_{d,G}=d^{2^s+1}(x+x^{2^s})+d^{2^{s+m}+1}e(x+x^{2^{s+m}}).
	\end{eqnarray*} 
	According to Lemma \ref{fundamental-lemma}, we have that $ h_{s}(x) $ is APN  if the following system
	\begin{eqnarray*}
		\begin{cases}
			d^3b(x+x^2)=\alpha&\\  
			d^{2^s+1}(x+x^{2^s})+d^{2^{s+m}+1}e(x+x^{2^{s+m}})=\beta &
		\end{cases}
	\end{eqnarray*} 
	only has $ x=0, 1  $ as its solutions for any $ d\neq 0 $, where $ \alpha$,~$\beta \in \mathbb{F}_{2^m} .$ Assume, to the contrary, that there exists some $ d\neq 0 $, $ x\neq 0,1 $ such that the above system holds. Now let $ s=2 $, $ b $ is a non-cube, $ e=\frac{1}{b^{2q-2}} $. Then $ \alpha\neq 0 $, $ b=\frac{\alpha}{d^3(x+x^2)} $, $ e=b^{-(2q-2)}=d^{6q-6}(x+x^2)^{2q-2} $ (note that $ \alpha^{2q-2}=1 $). Substituting it into the second equation of the above system, we have 
	\begin{eqnarray*}
		d^5(x+x^4)+d^{10q-5}(x+x^2)^{2q-2}(x+x^{4q})=\beta,
	\end{eqnarray*} 
	which is equivalent to 
	\begin{eqnarray}\label{h-1}
	d^5(x+x^4)+d^{10q-5}(x+x^2)^{2q-2}(x+x^{4q})+\Big(d^5(x+x^4)+d^{10q-5}(x+x^2)^{2q-2}(x+x^{4q})\Big)^q=0.
	\end{eqnarray} 
	Let $u=d^5$. Then the above equation becomes
	\begin{eqnarray}\label{h-2}
	u(x+x^4)+u^{2q-1}(x+x^2)^{2q-2}(x+x^{4q})+\Big(u(x+x^4)+u^{2q-1}(x+x^2)^{2q-2}(x+x^{4q})\Big)^q=0.
	\end{eqnarray} 
 Note that any nonzero element $ u $ of $ \mathbb{F}_{2^n} $ has a unique polar decomposition of the form $ u=vk $, where $ v^{q+1}=1 $, and $ k^{q-1}=1 $. Substituting $ u=vk $ into (\ref{h-2}), then (\ref{h-2}) can be reduced as
\begin{eqnarray*}
	v(x+x^4)+v^{2q-1}(x+x^2)^{2q-2}(x+x^{4q})+\Big(v(x+x^4)+v^{2q-1}(x+x^2)^{2q-2}(x+x^{4q})\Big)^q=0.
\end{eqnarray*} 
Multiplying both sides by  $ v^3 $ of the above equation, by the fact that $ v^{q}=v^{-1} $, we have
\begin{eqnarray*}
	Ay^3+By^2+B^qy+A^q=0,
\end{eqnarray*} 
where $ y=v^2\in \mathbb{F}_{2^n}$, and  $ A $, $ B $ are given in 1) of Theorem \ref{vital}. Now, according to 1) of Theorem \ref{vital}, we obtian that the element $ x+x^2 $ is a cube, and hence $ b $ is a cube from  the first equation $ d^3b(x+x^2)=\alpha $ of the system, since $ \alpha \in \mathbb{F}^{\ast}_{2^m} $ is a cube. This derives  a contradiction to the assumption that $ b $ is a non-cube.

2) $ s=2 $,~$ b ${\rm~is~not~a~cube},~$ g=e=b $.

Let $F(x)=bx^3$ and $G(x)=bx^5+bx^{4q+1}$. We have
\begin{align*}
\Delta_{d,F}(x)=d^3b(x+x^2)\hspace{0.2cm} {\rm and}\hspace{0.2cm} \Delta_{d,G}(x)=d^5b(x+x^4)+d^{4q+1}b(x+x^{4q}).
\end{align*}
By Lemma 2.1, $h_{s}(x)$ is APN if and only if the following system
\begin{align*}
\begin{cases}
d^3b(x+x^2)=\alpha\\
d^5b(x+x^4)+d^{4q+1}b(x+x^{4q})=\beta
\end{cases}
\end{align*}
only has trivial solutions $x\in\mathbb{F}_2$ for any $d\in\mathbb{F}_{2^n}^*$ and $\alpha, \beta\in\mathbb{F}_{2^m}$. Assume now that there exist some $d\in\mathbb{F}_{2^n}^*$,  $ \alpha \in\mathbb{F}_{2^m}  $, $\beta\in\mathbb{F}_{2^m}$ such that the system has non-trivial solutions $x\in\mathbb{F}_{2^n}\backslash\mathbb{F}_2$. Then $ \alpha\neq 0 $. By the first equation, we have $b=\frac{\alpha}{d^3(x+x^2)}$. Substituting this relation into the second equation, we have
\begin{align*}
\frac{d^2(x+x^4)}{x+x^2}+\frac{d^{4q-2}(x+x^{4q})}{x+x^2}=\frac{\beta}{\alpha},
\end{align*}
which implies that
\begin{align*}
\frac{d^2(x+x^4)}{x+x^2}+\frac{d^{4q-2}(x+x^{4q})}{x+x^2}+\bigg(\frac{d^2(x+x^4)}{x+x^2}+\frac{d^{4q-2}(x+x^{4q})}{x+x^2}\bigg)^q=0,
\end{align*}
since $\alpha,~\beta\in\mathbb{F}_{2^m}$. Let $\mu=d^2$. We have
\begin{align}\label{eq1}
\frac{\mu(x+x^4)}{x+x^2}+\frac{\mu^{2q-1}(x+x^{4q})}{x+x^2}+\bigg(\frac{\mu(x+x^4)}{x+x^2}+\frac{\mu^{2q-1}(x+x^{4q})}{x+x^2}\bigg)^q=0.
\end{align}
To complete the proof, it suffices to show that $x+x^2$ is a cube of $\mathbb{F}_{2^n}$, which will derive that $ b $ is a cube from the first equation of the above system and this will yield a contradiction to the assumption that $ b $ is a  non-cube. Let  $\mu=\nu k$, where $\nu^{q+1}=1$ and $k\in\mathbb{F}_{2^m}^*$, and substitute $\mu=\nu k$ into \eqref{eq1}, we have
\begin{align*}
\frac{\nu(x+x^4)}{x+x^2}+\frac{\nu^{2q-1}(x+x^{4q})}{x+x^2}+\bigg(\frac{\nu(x+x^4)}{x+x^2}+\frac{\nu^{2q-1}(x+x^{4q})}{x+x^2}\bigg)^q=0.
\end{align*}
Multiplying both sides of the above equation by $\nu^3$, we have
\begin{align*}
Ay^3+By^2+B^qy+A^q=0,
\end{align*}
where $y=\nu^2$, $A=\Big(\frac{x+x^{4q}}{x+x^2}\Big)^q$ and $B=\frac{x+x^4}{x+x^2}=1+x+x^2$. According to 2) of Theorem \ref{vital}, $ x+x^2 $ is a cube in $ \mathbb{F}_{2^n} $, otherwise, the above equation has no solutions in $ \mathbb{F}_{2^n} $. \end{proof}

{\bf Example 1}. Besides the two infinite classes of APN functions presented in Corollary \ref{corollary}, we can also find a new instance of APN  functions over $ \mathbb{F}_{2^{10}} $ CCZ-inequivalent to any other known APN functions. Let $ z $ be a primitive element in $ \mathbb{F}^{\ast}_{2^{10}} $. Then 
\begin{eqnarray*}\label{h_{1,2, b,0,d,e}-instance}
h_{s}(x)=a{\rm Tr}^n_{m}(bx^{3})+ a^q{\rm Tr}^n_{m}(gx^{5}+ex^{4q+1})
\end{eqnarray*} 
is an APN function  over $ \mathbb{F}_{2^{10}} $, where $ b=1 $, $ g=z $, $ e=z^{369}$.

\begin{table}[h]
	\centering
	\caption{All Known CCZ-inequivalent APN  functions over $ \mathbb{F}_{2^{10}} $, $ q=2^5 $}
	\label{table3} 
	\centering
	\begin{tabular}{|m{150pt}|m{117pt}|m{55pt}|}
		\hline
		Function  & Conditions  & Family \\  
		\hline
		$  x^{2^i+1} $ & $ i=1, 3 $ &  Gold \\
		\hline
		$ x^{57} $ & $ -$  & Kasami \\
		\hline
		$ x^{339} $ & $ -$  &  Dobbertin \\
		\hline
		$ x^{6}+x^{33}+\alpha^{31}x^{192} $ & $ \alpha $ primitive in $ \mathbb{F}^{\ast}_{2^{10}} $  & F3  \\
		\hline
		$ x^{33}+x^{72}+\alpha^{31}x^{258} $ &$ \alpha $ primitive in $ \mathbb{F}^{\ast}_{2^{10}} $ & F3 \\
		\hline
		$ x^3+{\rm Tr}^{10}_{1}(x^9) $ & $ -$ 
		& F4 \\
		\hline
		$ x^3+\alpha^{-1} {\rm Tr}^{10}_{1}(\alpha ^3x^9) $	& $ \alpha $ primitive in $ \mathbb{F}^{\ast}_{2^{10}} $ & F4 \\
		\hline
		\tabincell{l}{$ u(u^qx+ux^q)(x+x^q)+$\\ $(u^qx+ux^q)^{2^{2i}+2^{3i}}+$\\ $\alpha(u^qx+ux^q)^{2^{2i}}(x+x^q)^{2^i}+$\\ $\beta(x+x^q)^{2^{i}+1}$ }  & \tabincell{l}{ $ u $ primitive in $\mathbb{F}^{\ast}_{2^{10}}$, \\ $z$ primitive in $ \mathbb{F}^{\ast}_{2^5} $, \\ $i=1$, 
			$\alpha=1 $, $ \beta=1 , z^7, z^{11}$;\\
			$i=2$, 	$\alpha=1 $, $ \beta=1 , z^3, z^{15}$}   & F12 \\
		\hline
		$ B(x)=x^3+\alpha^{341}x^{36} $	& $-$  & sporadic, see \cite{Edel-Kyureghyan-Pott-2006}  \\
		\hline
		\tabincell{l}{$ x^3+\omega x^{2^s+1}+$$\omega^2x^{3q}+x^{(2^s+1)q} $}	& 	\tabincell{l}{$s=3,7,$  $\omega$ primitive in $ \mathbb{F}^{\ast}_{2^2} $\\} & F14 \\
		\hline			
		\tabincell{l}{$ \alpha{\rm Tr}^{n}_{m}(\alpha x^3)+\alpha^q {\rm Tr}^{n}_{m}(\alpha^{3}x^9) $}&	\tabincell{l}{ $ \alpha $ primitive in $ \mathbb{F}^{\ast}_{2^{10}}$ }   & F15 \\
		\hline	
		\tabincell{l}{$ \alpha{\rm Tr}^{n}_{m}(x^3)+\alpha^q {\rm Tr}^{n}_{m}(\alpha^{11}x^9) $}&	\tabincell{l}{ $ \alpha $ primitive in $ \mathbb{F}^{\ast}_{2^{10}}$ }   &\tabincell{l}{ sporadic, see\\ Remark \ref{f(m-2)}}\\
		\hline		
		\tabincell{l}{$ \alpha{\rm Tr}^{n}_{m}( x^3)+\alpha^q {\rm Tr}^{n}_{m}(\alpha x^5+\alpha^{369}x^{4q+1}) $}&	\tabincell{l}{ $ \alpha $ primitive in $ \mathbb{F}^{\ast}_{2^{10}}$ }   & \tabincell{l}{ sporadic, see\\ Example 1} \\
		\hline	
	\end{tabular}
\end{table}

\section{Conclusions}

Let $ n=2m $, and $ q=2^m $. We studied a class of quadratic functions with the form
$ f(x)=a{\rm Tr}^{n}_{m}(F(x))+a^q{\rm Tr}^{n}_{m}(G(x))$, where $ F $, $ G $ are  quadratic functions. We found a new infinite family of APN quadrinomials over $ \mathbb{F}_{2^n} $, $ a\in \mathbb{F}_{2^n} $, $ n=2m $ with $ m $ odd as follows.
\begin{eqnarray*}
f_{1}(x)=a{\rm Tr}^{n}_{m}(bx^3)+a^q{\rm Tr}^{n}_{m}(b^3x^9), ~b \text{~not~a~cube},~a\notin \mathbb{F}_{q}.
\end{eqnarray*} 
We generalized the two infinite families of APN functions obtained in \cite{Budaghyan-Helleseth-Kaleyski-2020} to a broader condition on $ m $, that is, the assumption that $ {\rm gcd}(3,m)=1 $ needed in \cite{Budaghyan-Helleseth-Kaleyski-2020} can be removed, up to CCZ-equivalence. We also found two infinite families of APN functions over $ \mathbb{F}_{2^{2m}} $ for odd $ m $, which turned out to be in family F12, that is, the the Taniguchi APN functions when $ m=5 $, as follows.
\begin{eqnarray*}
	f_{2}(x)=a{\rm Tr}^{n}_{m}(bx^3)+a^q{\rm Tr}^{n}_{m}(x^5+\frac{1}{b^{2q-2}}x^{4q+1}), ~b \text{~not~a~cube},~a \in \mathbb{F}_{2^n} \backslash \mathbb{F}_{2^m},
\end{eqnarray*} 
and 
\begin{eqnarray*}
	f_{3}(x)=a{\rm Tr}^{n}_{m}(bx^3)+a^q{\rm Tr}^{n}_{m}(bx^5+bx^{4q+1}), ~b \text{~not~a~cube},~a \in \mathbb{F}_{2^n} \backslash \mathbb{F}_{2^m}.
\end{eqnarray*} 
Code isomorphism tests showed that $ f_{2} $ and $ f_{3}$ are CCZ-inequivalent to each other over $ \mathbb{F}_{2^{10}}$. We found two new instances of APN functions over $ \mathbb{F}_{2^{10}} $. We also proposed three open problems, and we cordially invite the readers to attack these open problems.

\section{Appendix}

\subsection{Proof of 1) in Theorem \ref{vital}}

\begin{proof} It can be checked that $ A^{q+1}+B^{q+1}=(x+x^q)^5=h^5 $ in this case. In the following, we assume that $ c $ is a non-cube in $ \mathbb{F}_{2^n} $. Note that $ A\neq 0 $. In fact, if $ A=0 $, then $ h+c+c^2=x^4+x^q =0$, which implies that $ x\in \mathbb{F}_{2^n}\cap \mathbb{F}_{2^{m-2}}=\mathbb{F}_{2}$, since $ m $ is odd, and $ {\rm gcd}(n,m-2)=1 $, a contradiction to the assumption that $ x\neq 0, 1 $. Let $ y:=y+\frac{B}{A} $. Then equation $ (\ref{key-1}) $ becomes 
	\begin{eqnarray*}
		y^3+\frac{AB^q+B^2}{A^2}y+\frac{A^{q+1}+B^{q+1}}{A^2}=0.
	\end{eqnarray*} 
	
	Let $ y=Ez $, where $ E $ satisfies that $ E^2=\frac{AB^q+B^2}{A^2} $. Note that $ E\neq 0 $. In fact, this would imply that $ AB^q=B^2 $, and hence $ A^qB=B^{2q} $, $ A^{q+1}B^{q+1}=B^{2(q+1)}$. However, by the fact that $ B\neq 0 $~(if $ B=0 $, then $ c=0, 1 $, a contradiction to the assumption that $ c $ is a non-cube), we have $ A^{q+1}+B^{q+1}=0 $, which implies that $ (x+x^q)^5=0 $, i.e., $ x\in \mathbb{F}_{q} $, and then $ c=x+x^2\in \mathbb{F}_{2^m} $ is a cube in $ \mathbb{F}_{2^n} $, since every element in $ \mathbb{F}_{2^m} $ is a cube by the fact that ${\rm gcd}(3,2^m-1)=1 $~(since $m$ is odd), a contradiction.

	Then the above equation becomes
	\begin{eqnarray}\label{key-2}
	z^3+z+a=0, 
	\end{eqnarray} 
	where $ a\neq 0 $ satisfies that 
	\begin{eqnarray*}
		a=\frac{A^{q+1}+B^{q+1}}{A^2E^3}.
	\end{eqnarray*} 
	It can be checked that 
	\begin{eqnarray}\label{1/a^2-1}
	\frac{1}{a^2}=\frac{(AB^q+B^2)^3}{A^2(A^{q+1}+B^{q+1})^2}.
	\end{eqnarray} 
	It is clear that equation (\ref{key-1}) has no solutions in $ \mathbb{F}_{2^n} $ if and only if (\ref{key-2})  has no solutions. To complete the proof, according to Theorem \ref{Williams}, we have to show that $ {\rm Tr}^{n}_{1}\Big(\frac{1}{a^2}\Big)=0 $, and $ t_{1} $ is a non-cube in   $ \mathbb{F}_{2^n} $, where $ t_{1} $ is one solution in $ \mathbb{F}_{2^n} $ of $ t^2+at+1=0 $.
	
	{\bf Claim 1.} $ {\rm Tr}^{n}_{1}\Big(\frac{1}{a^2}\Big)=0 $.
	
	In fact, we have  
	\begin{eqnarray}\label{1/a^2-2}
	\frac{1}{a^2}=\frac{(AB^q+B^2)^3}{A^2(A^{q+1}+B^{q+1})^2}=\frac{B^3+M}{A(A^{q+1}+B^{q+1})}+\Bigg(\frac{B^3+M}{A(A^{q+1}+B^{q+1})}\Bigg)^2,
	\end{eqnarray} 
	where $ M $  is one solution of the following equation
	\begin{eqnarray}\label{M}
	M^2+DM+H=0.
	\end{eqnarray} 
	Recall the notations that  $ D=A(A^{q+1}+B^{q+1}) $, $ H=A^2(A^qB^3+AB^{3q}+B^{2+2q}) $, $ A=c^{2-2q}(h+c+c^2) $, $ B=c+c^2 $, $ c=x+x^2 $.
	We need only to show that the above equation in $M$ has solutions in $ \mathbb{F}_{2^n} $, i.e., $ {\rm Tr}^{n}_{1}\Big({\frac{H}{D^2}}\Big)=0 $. This can be seen from the following fact. 
	\begin{eqnarray*}
		\frac{H}{D^2}=\frac{A^qB^3+AB^{3q}+B^{2+2q}}{(A^{q+1}+B^{q+1})^2}
	\end{eqnarray*} 
	is an element in $ \mathbb{F}_{2^m} $, since $ A^qB^3+AB^{3q}={\rm Tr}^{n}_{m}(A^qB^3)$,~$  A^{q+1} $, $ B^{q+1}\in \mathbb{F}_{2^m} $.
	
	Next, we need to find one solution $ t_{1} $ in $ \mathbb{F}_{2^n} $ of $ t^2+at+1=0 $, and show that $ t_{1} $ is a non-cube. It is clear that $ t_{1}$ can be represented as $av$, where $ v=\frac{B^3+M}{A(A^{q+1}+B^{q+1})}$, since  $ \frac{1}{a^2}=v+v^2 $ according to  (\ref{1/a^2-2}). Note that $ t_{1}=av $ satisfies that 
	\begin{eqnarray*}
		t^2_{1}=a^2v^2=\frac{(B^3+M)^2}{(AB^q+B^2)^3}.
	\end{eqnarray*} 
	Therefore, to show $ t_{1} $ is a non-cube in $ \mathbb{F}_{2^n} $, we have to show that $ B^3+M $ is a non-cube.

	{\bf Claim 2.} $ B^3+M $ is a non-cube in $ \mathbb{F}_{2^n} $.
	
	Our strategy is to find the explicit expression of $M$, and then show that $ B^3+M $ is a non-cube. To this end, we have to revisit equation ($ \ref{M} $), and explore more information on the element $ \frac{H}{D^2}$~(it is in $ \mathbb{F}_{2^m} $). Very fortunately, we  find  that $ {\rm Tr}^{m}_{1}\Big(\frac{H}{D^2}\Big)=0. $ In fact, recall the notations that $ h=x+x^q $, and $ r=x^{q+1} $, we find~(with computer assistance)~that~(a surprise)
	\begin{eqnarray}\label{H/D^2}
	\frac{H}{D^2}=u+u^2,
	\end{eqnarray} 
	where 
	\begin{eqnarray*}
		u=\frac{h^2(r+r^2)+r+r^4+hr^2}{h^5}.
	\end{eqnarray*} 
	Then $ M $ can be chosen as $ Du $~(this is because it suffices   to find  one solution of $ M^2+DM+H=0 $). We find that 
	\begin{eqnarray*}
		M=Du=A(A^{q+1}+B^{q+1})u&=&c^{2-2q}(h+c+c^2)h^5\cdot \frac{h^2(r+r^2)+r+r^4+hr^2}{h^5}\\
		&=&\frac{c^2(h+c+c^2)({h^2(r+r^2)+r+r^4+hr^2)}}{c^{2q}}.
	\end{eqnarray*} 
	Then, recall the notation that $ B=c+c^2 $, we can obtain the expression of $ B^3+M $ as follows. 
	\begin{eqnarray}\label{B^3+M}
	\begin{aligned}
	B^3+M&=&\frac{h(c^{2q+4}+c^{q+5}+c^{q+4})+c^{2q+4}+c^{q+5}}{c^{2q}}\\
	&=&\frac{c^2\Big(h(c^{2q+4}+c^{q+5}+c^{q+4})+c^{2q+4}+c^{q+5}\Big)}{c^{2+2q}}.
	\end{aligned}
	\end{eqnarray} 
	The above expression can be deduced from
	\begin{eqnarray*}
		h+h^2=c+c^q,~c^{q+1}=r+r^2+hr.
	\end{eqnarray*} 
	Note that $ h,~c^{2+2q}\in \mathbb{F}^{\ast}_{2^m} $ is a cube, it suffices to show that 
	\begin{eqnarray*}
		hc^2\Big(h(c^{2q+4}+c^{q+5}+c^{q+4})+c^{2q+4}+c^{q+5}\Big)
	\end{eqnarray*} 
	is a non-cube. By the fact that $ h+h^2=c+c^q $, we have 
	\begin{eqnarray*}
		hc^2\Big(h(c^{2q+4}+c^{q+5}+c^{q+4})+c^{2q+4}+c^{q+5}\Big)=c^5c^{q+1}((c+c^q)^2+h^2).
	\end{eqnarray*} 
	Since $ c^{q+1}$, $c+c^q $, $ h \in \mathbb{F}^{\ast}_{2^m} $ are all cubes in $ \mathbb{F}_{2^n} $, we have that the above element is a non-cube,  when $ c $ is a non-cube.
\end{proof}	

\subsection{Proof of 2) in Theorem \ref{vital}} 
\begin{proof}	
The proof is similar to that of 1) in Theorem \ref{vital}.
Recall the following notations:
$ r=x^{q+1}; h= x+x^{q};c=x+x^2; $$ A= \frac{h+c+c^2}{c^q};
B=1+c,$ from which we can obtain that $h+h^2=c+c^q$ and $A^{q+1}+B^{q+1}=\frac{(x+x^q)^5}{(x+x^2)^{q+1}}=\frac{h^5}{c^{q+1}}$.
Note that $A\neq 0$, otherwise, we have $x+x^{4q}=0$ that means that $x\in\mathbb{F}_{2^n}\cap \mathbb{F}_{4q}=\mathbb{F}_2$, since $\gcd(m+2,n)=1$. Then setting $y:=y+\frac{B}{A}$, this can transform \eqref{key-1} into
\begin{align}\label{eq2}
y^3+\frac{AB^q+B^2}{A^2}y+\frac{A^{q+1}+B^{q+1}}{A^2}=0.
\end{align}
Observe that $B\neq0$ (otherwise $c=1$ is a cube) and $AB^q+B^2\neq 0$, otherwise, we have $A^{q+1}+B^{q+1}=0$, that is, $h=0$, which implies that $c\in\mathbb{F}_q$ contracting to the assumption that $c$ is a non-cube, since $\gcd(3,2^m-1)=1$ for any odd $m$. Thus we can transform the equation \eqref{eq2} into
\begin{align}\label{eq3}
z^3+z+a=0
\end{align}
by setting $y=Ez$, where $a,E\in\mathbb{F}_{2^n}^*$ such that
\begin{align*}
E^2=\frac{AB^q+B^2}{A^2} \hspace{0.2cm} {\rm and}\hspace{0.2cm} a^2=\frac{A^2(A^{q+1}+B^{q+1})^2}{(AB^q+B^2)^3}.
\end{align*}
We need now to prove that equation \eqref{eq3} has no solutions in $\mathbb{F}_{2^n}$. According to  Theorem \ref{Williams}, we have to show that ${\rm Tr}_1^n\Big(\frac{1}{a^2}\Big)=0$ and the solutions in $\mathbb{F}_{2^n}$  of equation $t^2+at+1=0$  are not cubes of $\mathbb{F}_{2^n}$.

Firstly, we prove that ${\rm Tr}_1^n\Big(\frac{1}{a^2}\Big)=0$. Note that $\frac{1}{a^2}$ can be written as
\begin{align}\label{eq4}
\frac{1}{a^2}=\frac{B^3+M}{A(A^{q+1}+B^{q+1})}+\bigg(\frac{B^3+M}{A(A^{q+1}+B^{q+1})}\bigg)^2,
\end{align}
where $M$ is a solution of
\begin{align}\label{eq5}
M^2+DM+H=0,
\end{align}
where $D=A(A^{q+1}+B^{q+1})$ and $H=A^2(AB^{3q}+A^{q}B^3+B^{2(q+1)})$. Then we transform the problem into showing  that equation \eqref{eq5} has solutions in $\mathbb{F}_{2^n}$, which is equivalent to  ${\rm Tr}_1^n\Big(\frac{H}{D^2}\Big)=0$. Indeed, it can be seen that
\begin{align*}
\frac{H}{D^2}=\frac{AB^{3q}+A^{q}B^3+B^{2(q+1)}}{(A^{q+1}+B^{q+1})^2}=\frac{{\rm Tr}_m^n(AB^{3q})+B^{2(q+1)}}{(A^{q+1}+B^{q+1})^2},
\end{align*}
which is clearly in $\mathbb{F}_{q}$. Thus, ${\rm Tr}_1^n\Big(\frac{H}{D^2}\Big)=0$.

Then, we show that the solutions of $t^2+at+1=0$ are not cubes in $\mathbb{F}_{2^n}$. Assume that $t_1$ is a solution of $t^2+at+1=0$. Then by \eqref{eq4}, it can be represented by $t_1=a\nu$, where $\nu=\frac{B^3+M}{A(A^{q+1}+B^{q+1})}$, and thus
\begin{align*}
t_1^2=a^2\nu^2=\frac{(B^3+M)^2}{(AB^q+B^2)^3}.
\end{align*}
Therefore, to show $t_1$ is not a cube, it suffices to show  $(B^3+M)^2$ and thus $B^3+M$ is not a cube of $\mathbb{F}_{2^n}$. In the following, we show this fact by giving the explicit expression of $M$ by revisiting \eqref{eq5} again.

By the above discussion, we have obtained that $\frac{H}{D^2}\in\mathbb{F}_q$. We  further  want to show that ${\rm Tr}_1^m\Big(\frac{H}{D^2}\Big)=0$, which is equivalent to showing
\begin{align}\label{eq6}
\frac{H}{D^2}=\mu+\mu^2
\end{align}
for some $\mu\in\mathbb{F}_{2^m}$. Recall that $A=\frac{h+c+c^2}{c^q}$, $B=1+c$ and $A^{q+1}+B^{q+1}=\frac{h^5}{c^{q+1}}$,  we have
\begin{align*}
A^qB^3+AB^{3q}=&\frac{(h+c^q+c^{2q})B^3}{c}+\frac{(h+c+c^{2})B^{3q}}{c^q}\\
=&\frac{c^q(h+c^q+c^{2q})B^3+c(h+c+c^{2})B^{3q}}{c^{q+1}}\\
=&\frac{h(c^qB^3+cB^{3q})+c^qB^3(c^q+c^{2q})+cB^{3q}(c+c^2)}{c^{q+1}}.
\end{align*}
While
\begin{align*}
h(c^qB^3+cB^{3q})=&h\big(c^q(1+c+c^2+c^3)+c(1+c^q+c^{2q}+c^{3q})\big)\\
=&h(c+c^q+c^{q+1}(c+c^q)+c^{q+1}(c+c^q)^2)
\end{align*}
and
\begin{align*}
c^qB^3(c^q+c^{2q})+cB^{3q}(c+c^2)=&c^q(1+c+c^2+c^3)(c^q+c^{2q})+(c^q(1+c+c^2+c^3)(c^q+c^{2q}))^q\\
=&c^{2q}+c^{3q}+c^{2q+1}+c^{3q+1}+c^{2q+2}+c^{3q+2}+c^{2q+3}+c^{3q+3}+\\
&~~~~~~~~(c^{2q}+c^{3q}+c^{2q+1}+c^{3q+1}+c^{2q+2}+c^{3q+2}+c^{2q+3}+c^{3q+3})^q\\
=&(c+c^q)^2+c^3+c^{3q}+c^{q+1}(c+c^q)+c^{q+1}(c+c^q)^2.
\end{align*}
We have
\begin{align*}
c+c^q=x+x^q+(x+x^q)^2,
c^{q+1}=x^{q+1}+x^{2(q+1)}+x^{q+1}(x+x^q),
\end{align*}
from which we can obtain that
\begin{align*}
h(c^qB^3+cB^{3q})=&(x+x^q)^2+(x+x^q)^3+x^{q+1}(x+x^q)^2+x^{q+1}(x+x^q)^3+x^{q+1}(x+x^q)^5\\
&+x^{q+1}(x+x^q)^6+x^{(2q+1)}(x+x^q)^2+x^{(2q+1)}(x+x^q)^5
\end{align*}
and
\begin{align*}
c^qB^3(c^q+c^{2q})+cB^{3q}(c+c^2)=&(x+x^{q})^2+(x+x^{q})^3+(x+x^{q})^5+(x+x^{q})^6+x^{q+1}(x+x^{q})^2\\
&+x^{q+1}(x+x^{q})^3+x^{q+1}(x+x^{q})^4+x^{q+1}(x+x^{q})^5\\
&+x^{2(q+1)}(x+x^{q})^2+x^{2(q+1)}(x+x^{q})^4.
\end{align*}
Thus we have
\begin{align*}
c^{q+1}(A^qB^3+AB^{3q})=&(x+x^{q})^5+(x+x^{q})^6+x^{q+1}(x+x^{q})^4+x^{q+1}(x+x^{q})^6\\
&+x^{2(q+1)}(x+x^{q})^4+x^{2(q+1)}(x+x^{q})^5
\end{align*}
and
\begin{align*}
c^{2(q+1)}(A^qB^3+AB^{3q})=&x^{q+1}(x+x^{q})^5+x^{q+1}(x+x^{q})^7+x^{2(q+1)}(x+x^{q})^4+x^{2(q+1)}(x+x^{q})^7\\
&+x^{4(q+1)}(x+x^{q})^4+x^{4(q+1)}(x+x^{q})^5.
\end{align*}
We further have
\begin{align*}
c^{2(q+1)}B^{2(q+1)}=&c^{2(q+1)}(1+c)^{2(q+2)}\\
=&c^{2(q+1)}+c^{2(q+1)}(c+c^q)^2+c^{4(q+1)}\\
=&x^{2(q+1)}+x^{2(q+1)}(x+x^q)^6+x^{4(q+1)}(x+x^q)^2+x^{8(q+1)}.
\end{align*}
Recall  that $h=x+x^q$, $r=x^{q+1}$. Thus, we have
\begin{align*}
c^{2(q+1)}(A^qB^3+AB^{3q}+B^{2(q+1)})=&rh^5+rh^7+r^2+r^2h^4+r^2h^6+r^2h^7+r^4h^2+r^4h^4+r^4h^5+r^8,
\end{align*}
and 
\begin{align*}
\frac{H}{D^2}=\frac{r}{h^5}+\frac{r}{h^3}+\frac{r^2}{h^{10}}+\frac{r^2}{h^6}+\frac{r^2}{h^4}+\frac{r^2}{h^3}+\frac{r^4}{h^8}+\frac{r^4}{h^6}+\frac{r^4}{h^5}
+\frac{r^8}{h^{10}}=\mu+\mu^2
\end{align*}
where $\mu=\frac{r+r^4+r^2h+(r+r^2)h^2}{h^5}$.  The rest of this proof is  similar to that of Theorem \ref{vital}, so we omit it here.\end{proof}

\end{document}